\DeclareBoldMathCommand\boldlangle{\left\langle}
\DeclareBoldMathCommand\boldrangle{\right\rangle}
\newcommand{\barr}{\begin{eqnarray}}
\newcommand{\earr}{\end{eqnarray}}
\newcommand{\beq}{\begin{equation}}
\newcommand{\eeq}{\end{equation}}
\def\e{{\mathrm e}}
\newcommand{\be}{\begin{equation}}
\newcommand{\ee}{\end{equation}}
\newtheorem{thm}{Theorem}
\newtheorem{prop}{Proposition}
\newtheorem{lem}{Lemma}
\newtheorem{mydef}{Definition}
\theoremstyle{remark}
\newtheorem{rmk}{Remark}
\newtheorem{claim}{Claim}
\newenvironment{Proof}{\removelastskip\par\medskip
\noindent{\em Proof.}
\rm}{\penalty-20\null\par\medbreak}
\newcommand{\Hi}{\mathscr{H}}
\newcommand{\EM}{\mathcal{E}}
\newcommand{\IN}{\mathcal{I}}
\newcommand{\Dec}{\mathcal{D}}
\newcommand{\St}{\mathcal{S}}
\newcommand{\DIO}{\operatorname{DIO}}
\newcommand{\SIO}{\operatorname{SIO}}
\newcommand{\IO}{\operatorname{IO}}
\newcommand{\varphii}{\psi'}
\renewcommand{\leq}{\leqslant}
\renewcommand{\geq}{\geqslant}
\begin{document}

\title{Generic aspects of the resource theory of quantum coherence}

\author{Fabio Deelan Cunden$^{1,2}$, Paolo Facchi$^{3,4}$, Giuseppe Florio$^{4,5}$, Giovanni Gramegna$^{3,4}$}
\address{
$^{1}$ SISSA, Trieste 34136, Italy }
\address{
$^{2}$ Dipartimento di Matematica, Universit\`a di Bari, I-70125 Bari, Italy }
\address{
$^{3}$ Dipartimento di Fisica and MECENAS, Universit\`a di Bari, I-70126 Bari, Italy }
\address{
$^{4}$ INFN, Sezione di Bari, I-70126 Bari, Italy}
\address{
$^{5}$ Dipartimento di Meccanica, Matematica e Management, Politecnico di Bari, I-70125 Bari, Italy}
\date{\today}

\begin{abstract} 
	The class of incoherent operations induces a pre-order on the set of
quantum pure states, defined by the possibility of converting one state into the
other by transformations within the class.  We prove that if two $n$-dimensional
pure states are chosen independently according to the natural uniform
distribution, then the probability that they are comparable vanishes as $n\rightarrow\infty$. We also study the maximal success probability of incoherent conversions and find an explicit formula for its large-$n$ asymptotic distribution. Our analysis is based on the observation that the extreme values (largest and smallest components) of a random point uniformly sampled from the unit simplex are distributed asymptotically as certain explicit homogeneous Markov chains.
\end{abstract}

\maketitle

\section{Introduction}
The use of probability models and integral geometry to explain ``generic aspects
of quantum states'' is now a well-established point of view and there are multiple
lessons to learn from this approach~\cite{zyczkowskibeng,Hayden06,Popescu06,Bremner09,Facchi08Phase,DePasquale10,Cunden13,Gupta20,Gramegna20}. 
The next logical step is to use the same set of probabilistic ideas to
describe ``generic aspects of quantum resource theories''.
However, the
difficulties in describing the generic or typical aspects of resource theories
using probabilistic models remain considerable. 

Perhaps, the first question of
this flavour arose in the resource theory of entanglement.
After exposing a precise connection between the algebraic notion of
majorization \cite{Bhatia}  and convertibility among pure states by local
operations and classical  communications (LOCC), Nielsen~\cite{nielsen1999} made the remark that the set of pure states of a
bipartite system has a natural order relation, induced by the majorization
relation on their local spectra. This relation is \emph{not} total: not
all quantum states can be compared, i.e. are connected by a LOCC transformation.
Nielsen conjectured that for a bipartite system the set of pairs of pure states 
that are LOCC-convertible has relative volume asymptotically zero in the limit 
of large dimension: most quantum pure states are \emph{incomparable}!
He offered a probabilistic argument to justify the conjecture; shortly
after, another heuristic explanation based on integral geometry was put forward
in~\cite{Zyck02}. 

As far as we know, the only rigorous result around this
question is the proof that for an infinite-dimensional system, the set of pairs
of LOCC-convertible pure states is nowhere dense and so has measure
zero~\cite{Clifton02} (this statement though does not imply the conjecture).
Recently we have made some progress on Nielsen's conjecture by an extensive
numerical analysis of the volume of LOCC-convertible pairs of pure states~\cite{majorization}. 
The results support the conjecture, provide some nontrivial quantitative measure 
of the volume of LOCC-convertible states, and suggest new connections with random
matrix theory.
\par

It is natural to ask whether the property of `generic incomparability' is a
general feature  shared by other quantum resource theories.
Our attempt to answer this question starts from investigating this circle of
ideas for the class of \emph{quantum incoherent operations} \cite{Baumgratz14,
Biswas2017,Du15,Winter16,Chitambar16a,Chitambar16b,Zhu17,Fang18,Lami19,Regula20}. This choice is not arbitrary:
the resource theory of coherence is sufficiently simple to be tractable, and 
yet shares the connection with the algebraic notion of majorization that appears
in the  most interesting resource theories \cite{Horodecki13,Horodecki18}.
\par
  
It is the purpose of this article to present a complete analysis for the resource 
theory of coherence and indicate what might in the future be extended to other 
resource theories. 
\par
The structure of the paper is as follows. In Section~\ref{sec:resource} we
recall the definitions of incoherent and stricly incoherent quantum operations,
and the connection between incoherent convertibility and the majorization
relation.
In Section~\ref{sec:random} we
present the distributions of the smallest and largest `components' of random
pure quantum states; these are the main probabilistic properties relevant to our analysis.
Section~\ref{sec:results} contains the main results: the set of comparable
states in the resource theory of coherence has volume zero in the limit of large dimension
$n\to\infty$ 
(this is the analogue of Nielsen's conjecture in the theory of entanglement);
this problem is related to the persistence probability of a non-Markovian random
walk and we give numerical estimates on the rate of decay to zero; in the limit
$n\to\infty$ the
maximal success probability of incoherent conversion between two random
independent pure states has a nontrivial asymptotic distribution that we
characterise completely.
We conclude the paper with some final remarks in Section~\ref{sec:conclusion}.

\section{Resource theories of coherence}\label{sec:resource}
Recall that a \emph{resource theory} is defined by i) a set of  \emph{free states}, and ii) a class of \emph{free} or \emph{allowed operations}. In this work we consider the resource theories of coherence introduced and studied by \AA berg~\cite{Aberg06}, Baumgratz, Cramer, and Plenio~\cite{Baumgratz14}, Winter and Yang~\cite{Winter16}, and Chitambar and Gour~\cite{Chitambar16a,Chitambar16b}.

\subsection{Free states and free operations}
We denote by $\Hi_n$ a complex Hilbert space of dimension $n$, and by $\St_n$ the corresponding set of states $\rho$ (density matrices: $\rho\geq0$, $\operatorname{tr} \rho =1$).
Fix a basis $\{\ket{i}\}_{i=1}^n$ in $\Hi_n$, to be called the \emph{incoherent basis}. The choice may be dictated by physical considerations (for example, the eigenbasis of a particular observable). 

The set of free states in the resource theory of coherence is the set  of \emph{incoherent states} $\IN_n \subset \St_n$ defined as 
\begin{equation}
	\IN_n:=\biggl\{\rho\in\St_n\colon \rho=\sum_{i=1}^n p_i \ketbra{i}\biggr\},
\end{equation}
i.e.\ density matrices which are diagonal in the incoherent basis.
Notice that  $\IN_n$ is the image of $\St_n$ under the \emph{decohering map}, i.e. $\IN_n=\Dec(\St_n)$, where
\begin{equation}
	\Dec(\rho):=\sum_{i=1}^n \braket{i}{\rho|i} \ketbra{i}.
\end{equation}
The specification of the free states alone does not completely determine a resource theory. It is indeed necessary to specify a class of free operations. For the resource theory of coherence a number of different alternatives has been proposed, each yielding a different resource theory (see, e.g.,~\cite{Chitambar16a,Chitambar16b}). Here we focus on three possible choices of free operations, that allow for a criterion for convertibility between pure states in terms of the majorization relation. 

Recall that any quantum channel, that is a completely positive and trace preserving (CPTP) map $\EM:\St_n\rightarrow \St_n$ can be characterized in terms of a Kraus representation:
\begin{equation}
	\EM(\rho)= \sum_{\alpha} \mathcal{K}_\alpha (\rho) = \sum_{\alpha} K_\alpha \rho K_\alpha^\dagger,
\end{equation}
where   $\mathcal{K}_\alpha (\cdot) = K_\alpha (\cdot)  K_\alpha^\dagger$,
and $\{K_\alpha\}$ is a set of (non-uniquely determined) operators on $\Hi_n$, with $\sum_\alpha K_\alpha^\dagger K_\alpha = \mathbb{I}$.
 We can then define three classes of  CPTP maps on $\St_n$, representing three possible choices of free operations.
\begin{mydef}
	A
	quantum channel $\EM$ is said to be an \emph{incoherent operation} ($\IO$) if it can be represented by Kraus operators $\{K_\alpha\}$ such that
	\begin{equation}\label{eq:def1}
	\Dec(\mathcal{K}_\alpha(\ketbra{i}) )=\mathcal{K}_\alpha (\ketbra{i} ) 
	\end{equation}
	for all $\alpha$, and for all the elements $\ket{i}$ of the incoherent basis.
\end{mydef}
Note that if $\EM$ is an $\IO$, 
\begin{equation}
\rho\in\IN_n\Rightarrow	
\mathcal{K}_\alpha(\rho), 
\quad\text{for all  $\alpha$}.
\end{equation}
This restriction guarantees that, even if one has access to individual measurement outcomes $\alpha$ of the instrument $\{K_\alpha\}$, one cannot generate coherent states starting from an incoherent one. Notice that equation~\eqref{eq:def1} can be interpreted as a 
requirement of commutation between the decohering operation $\Dec$ and the operation $\mathcal{K}_\alpha$, 
when acting on the set of  incoherent states $\IN_n$. One can also restrict further the allowed operations by requiring the validity of such commutativity on  the whole set of states $\St_n$.
\begin{mydef}
	A 
	quantum channel $\EM$ is said to be a \emph{strictly incoherent operation} ($\SIO$) if it can be represented by Kraus operators $\{K_\alpha\}$ such that
	\begin{equation}
	\Dec(\mathcal{K}_\alpha(\rho)) = \mathcal{K}_\alpha(\Dec(\rho))
	\end{equation}
	for all $\alpha$, and for all  $\rho\in\St_n$.
\end{mydef}

%

One can also consider a third class of incoherent operations that satisfy the 
commutativity relation with $\Dec$ at `global'  level rather than at the level of Kraus operator representations. 

\begin{mydef}
	A quantum channel $\EM$ is said to be a \emph{dephasing covariant incoherent operation} ($\DIO$) if 
	\begin{equation}
		\Dec(\EM(\rho))=\EM(\Dec(\rho)) 
	\end{equation}
	for all $\rho\in\St_n$.
\end{mydef}

It is clear that  
$\SIO \subsetneq \IO$ and $\SIO\subsetneq \DIO$, while the classes $\IO$  and $\DIO$ are incomparable \cite{Chitambar16a,Chitambar16b}. 
It has been shown that transformations between \emph{pure states} (i.e.
rank-one projections $\psi = \ketbra{\psi}{\psi}$, with $\|\psi\|=1$) are fully
governed by the same majorization criteria \cite{Du15,Zhu17,Regula20}.
Therefore, although the three classes $\IO$, $\SIO$ and $\DIO$ are 
different from each other, they are operationally equivalent as far as
pure-to-pure state transformations are concerned. We also mention that all these classes are subclasses of the \emph{maximally incoherent operations} (MIO), which is the largest possible class of operations not generating coherent states starting from incoherent ones, and the very first to be studied \cite{Aberg06}. However, our results do not apply to this class, since (as far as we know) pure state conversions under MIO are not characterized by a majorization relation.
\subsection{Convertibility criterion and  majorization relation}
First we need to introduce some notation. In this paper $\Delta_{n-1}$ is the unit simplex, i.e. the set of $n$-dimensional probability vectors. For a vector $x$, we denote by $x^{\downarrow}$  the decreasing rearrangement of $x$, with $x^{\downarrow}_j \geq x^{\downarrow}_k$ for $j < k$. If $x,y$ are two vectors, we say that $x$ is \emph{majorized} by $y$ --- and write $x\prec y$ --- if \begin{equation}
\sum_{j=1}^k x^{\downarrow}_j\leq \sum_{j=1}^k y^{\downarrow}_j
\end{equation} 
for all $k=1,\dots,n$. For pure states $\psi =\ketbra{\psi}\in\St_n$, we write
\begin{equation}
	\delta(\psi):=(\abs{\psi_1}^2,\dots, \abs{\psi_n}^2)\in\Delta_{n-1},
\end{equation}
where $\psi_j = \braket{j}{\psi}$, i.e. the diagonal of the  density matrix $\psi$, in the (fixed) incoherent basis.

The following results expose the connection between the resource theories of coherence and the majorization relation.
\begin{thm}[\cite{Du15,Zhu17,Regula20}]\label{thm:conv} A pure state $\psi$ can be transformed into a pure state $\varphii$ under $\IO,\SIO$ or $\DIO$ if and only if $\delta(\psi)\prec \delta(\varphii)$.
\end{thm}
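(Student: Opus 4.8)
The plan is to prove the two implications separately, exploiting the inclusions $\SIO\subset\IO$ and $\SIO\subset\DIO$. A reduction common to both directions: since the target $\ketbra{\varphii}$ is a rank-one projection, any Kraus decomposition $\EM(\cdot)=\sum_\alpha K_\alpha(\cdot)K_\alpha^\dagger$ realising $\psi\mapsto\varphii$ must satisfy $K_\alpha\ket{\psi}=\sqrt{p_\alpha}\,e^{i\theta_\alpha}\ket{\varphii}$ for every $\alpha$, with $p_\alpha\geq0$ and $\sum_\alpha p_\alpha=1$; absorbing the $\theta_\alpha$ into the $K_\alpha$ and composing $\EM$ with suitable diagonal (hence incoherent) unitaries, which change neither $\delta(\psi),\delta(\varphii)$ nor the class of the map, one may assume all components $\psi_j,\varphii_j$ are non-negative. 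Recall also that an $\IO$ Kraus operator has at most one non-zero entry in each column, $K_\alpha=\sum_i c_{\alpha i}\ket{f_\alpha(i)}\bra{i}$, while an $\SIO$ Kraus operator is a diagonal matrix times a (partial) permutation. It is therefore enough to prove sufficiency for $\SIO$ and necessity for $\IO$ and for $\DIO$.

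\emph{Sufficiency.} Assume $\delta(\psi)\prec\delta(\varphii)$. Then $\delta(\psi)=D\,\delta(\varphii)$ for a doubly stochastic $D$, and by Birkhoff--von Neumann $D$ is a convex combination $\sum_\alpha p_\alpha\Pi_{\sigma_\alpha}$ of permutation matrices. Put $\tau_\alpha:=\sigma_\alpha^{-1}$ and $K_\alpha:=\sqrt{p_\alpha}\sum_i\bigl(\varphii_{\tau_\alpha(i)}/\psi_i\bigr)\ket{\tau_\alpha(i)}\bra{i}$, omitting the $i$-th term whenever $\psi_i=0$ (the identity $\delta(\psi)=D\delta(\varphii)$ then forces $\varphii_{\tau_\alpha(i)}=0$, so nothing is lost), and adjoining the single incoherent Kraus operator $\sum_{i:\psi_i=0}\ketbra{i}$ to complete the instrument. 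Each $K_\alpha$ is a diagonal times a permutation, hence $\SIO$; one checks directly that $K_\alpha\ket{\psi}=\sqrt{p_\alpha}\ket{\varphii}$, so $\sum_\alpha K_\alpha\ketbra{\psi}K_\alpha^\dagger=\ketbra{\varphii}$, while $\sum_\alpha K_\alpha^\dagger K_\alpha=\mathbb{I}$ because, by the choice of the $\tau_\alpha$, $\sum_\alpha p_\alpha\,\delta(\varphii)_{\tau_\alpha(i)}=\delta(\psi)_i$ for all $i$. This produces an $\SIO$ — hence an $\IO$ and a $\DIO$ — channel converting $\psi$ into $\varphii$.

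\emph{Necessity.} I would show that each partial sum $g_k(\psi):=\sum_{j=1}^{k}\delta(\psi)^{\downarrow}_j$ is non-decreasing under the allowed operations — equivalently, that every Schur-concave function of $\delta$ is a coherence monotone — which is precisely $\delta(\psi)\prec\delta(\varphii)$. For $\IO$: write $g_k(\varphii)=\max_{|T|=k}\bra{\varphii}\Pi_T\ket{\varphii}$ with $\Pi_T=\sum_{j\in T}\ketbra{j}$; the column structure of $K_\alpha$ makes $K_\alpha^\dagger\Pi_T K_\alpha$ supported on the fibres $f_\alpha^{-1}(j)$, $j\in T$; choosing, for each $\alpha$, $T$ to be the $f_\alpha$-image of the index set $S_0$ of the $k$ largest $\delta(\psi)_i$ and combining the per-outcome estimates through the completeness relation $\sum_\alpha K_\alpha^\dagger K_\alpha=\mathbb{I}$, one gets $g_k(\varphii)\geq g_k(\psi)$; the $\SIO$ case then follows from $\SIO\subset\IO$. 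For $\DIO$ a separate argument is needed, since a $\DIO$ channel need not admit an incoherent Kraus representation: one uses that $\EM$ commutes with $\Dec$ and maps the pure state $\psi$ to the pure state $\varphii$, so that the constraints on both the diagonal and the off-diagonal parts of $\EM(\ketbra{\psi})=\ketbra{\varphii}$ again force $\delta(\psi)\prec\delta(\varphii)$.

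The necessity direction is the real obstacle. In the $\IO$ case one must control the interference among distinct measurement outcomes $\alpha$ (the $(i,i')$ entries of $K_\alpha^\dagger K_\alpha$ within a common fibre), and the $\DIO$ case genuinely departs from the Kraus-operator bookkeeping; sufficiency, by contrast, is routine once the Birkhoff decomposition is in hand. A reference-style shortcut is of course available: the three classes are operationally equivalent as far as pure-to-pure conversions are concerned, so it suffices to settle any one of them and invoke the inclusions, which is the route taken in \cite{Du15,Zhu17,Regula20}.
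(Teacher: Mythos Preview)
The paper does not prove Theorem~\ref{thm:conv}; it is quoted from the cited references and used as a known input. There is therefore no ``paper's own proof'' to compare against, and your write-up is really an independent attempt at the literature result.

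Your sufficiency direction is correct and standard: the Birkhoff--von~Neumann decomposition $\delta(\psi)=\sum_\alpha p_\alpha\Pi_{\sigma_\alpha}\delta(\varphii)$ yields $\SIO$ Kraus operators $K_\alpha=\sqrt{p_\alpha}\sum_i(\varphii_{\tau_\alpha(i)}/\psi_i)\ket{\tau_\alpha(i)}\bra{i}$, and your handling of the degenerate indices $\psi_i=0$ via the extra diagonal Kraus operator is clean; completeness and $K_\alpha\ket{\psi}=\sqrt{p_\alpha}\ket{\varphii}$ check out, and $\SIO\subset\IO\cap\DIO$ then covers all three classes.

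The necessity direction, however, is not a proof but an outline with a real gap that you yourself identify. For $\IO$, choosing $T_\alpha=f_\alpha(S_0)$ gives $\Pi_{T_\alpha}K_\alpha=K_\alpha\Pi_{f_\alpha^{-1}(T_\alpha)}$ with $f_\alpha^{-1}(T_\alpha)\supseteq S_0$ possibly strictly, and the cross terms coming from indices in $f_\alpha^{-1}(T_\alpha)\setminus S_0$ can have either sign; the inequality $\sum_\alpha\|K_\alpha\Pi_{f_\alpha^{-1}(T_\alpha)}\ket{\psi}\|^2\geq\|\Pi_{S_0}\ket{\psi}\|^2$ does not follow from the completeness relation alone. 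The published proofs close this by a different route (e.g.\ tracking the full vector of diagonal entries through each Kraus operator and showing the induced stochastic map is doubly substochastic in the right sense, or via the maximally-coherent-state/LOCC embedding), not by the per-outcome $\Pi_T$ bound you sketch. For $\DIO$ you give no argument at all beyond a one-line promise. Finally, your closing ``shortcut'' --- invoke the operational equivalence of the three classes --- is exactly the content of the theorem you are trying to prove, so it is either circular or simply a deferral to \cite{Du15,Zhu17,Regula20}, which is precisely what the paper itself does.
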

This theorem allows us to endow the set of pure states on $\Hi_n$ with a natural pre-order relation: we will write $\psi\prec \psi'$ whenever $\delta(\psi)\prec\delta(\psi')$. 


\begin{thm}[\cite{Zhu17}]\label{thm:probconv}
	For two pure states $\psi$ and $\varphii$, the maximal conversion probability 
	under $\IO$ is given by
	\be
	\Pi(\delta(\psi),\delta(\varphii)),
	\ee
	with $\Pi(\cdot,\cdot)$ being defined on pairs of probability vectors with $n$ nonzero components as
	\be
	\Pi(x,y)=\min_{1\leqslant k \leqslant n}\frac{\sum_{j=k}^n x^{\downarrow}_j}{\sum_{j=k}^n y^{\downarrow}_{j}}.
	\label{def:Pi}
	\ee
\end{thm}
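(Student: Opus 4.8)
The plan is to transplant to the incoherent setting the argument that yields Vidal's formula for the maximal probability of a stochastic LOCC conversion of bipartite pure states: everything is pushed through the diagonal map $\delta$ and Theorem~\ref{thm:conv}, and the statement then splits into a converse bound and a matching achievability construction. Write $x:=\delta(\psi)$ and $y:=\delta(\varphii)$ (probability vectors with $n$ positive entries, by hypothesis), and for $v\in\Delta_{n-1}$ set $E_k(v):=\sum_{j=k}^n v^{\downarrow}_j$, the sum of the $n-k+1$ smallest components, so that $\Pi(x,y)=\min_{1\le k\le n}E_k(x)/E_k(y)$. Two elementary facts will be used repeatedly: $E_k(v)=\min_{|S|=k-1}\sum_{j\notin S}v_j$ is concave and nonnegative on $\Delta_{n-1}$ with $E_1\equiv 1$; and $v\prec w$ holds if and only if $E_k(v)\ge E_k(w)$ for every $k$.

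First I would prove the converse, namely that no $\IO$ protocol converts $\psi$ to $\varphii$ with probability exceeding $\Pi(x,y)$. A generic $\IO$ instrument $\{\mathcal{K}_\alpha\}$ sends the pure state $\psi$ to an ensemble of pure states $\{p_\alpha,\psi_\alpha\}$, and the structural input I need is the ensemble version of Theorem~\ref{thm:conv}, which I would cite from \cite{Du15,Zhu17,Regula20} (or re-derive by attaching a classical flag to the outcomes and applying the deterministic criterion on an enlarged incoherent basis): such an ensemble can be produced from $\psi$ only if $\delta(\psi)\prec\sum_\alpha p_\alpha\,\delta(\psi_\alpha)$. If $\varphii$ occurs among the outcomes with probability $p$, then concavity and nonnegativity of $E_k$ together with this majorization give $E_k(x)\ge E_k\!\big(\sum_\alpha p_\alpha\delta(\psi_\alpha)\big)\ge\sum_\alpha p_\alpha E_k(\delta(\psi_\alpha))\ge p\,E_k(y)$, so $p\le E_k(x)/E_k(y)$ for every $k$ and hence $p\le\Pi(x,y)$.

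Next comes achievability: I would exhibit an $\IO$ instrument with two outcomes attaining $p^{\ast}:=\Pi(x,y)$. By the ensemble criterion it suffices to produce an incoherent ``failure'' state $\eta$ with $\delta(\psi)\prec p^{\ast}\,\delta(\varphii)+(1-p^{\ast})\,\delta(\eta)$. The defining inequalities $E_k(x)\ge p^{\ast}E_k(y)$ (saturated at the minimising index) rewrite, after passing from tail sums to head sums, as $\sum_{j=1}^k x^{\downarrow}_j-p^{\ast}\sum_{j=1}^k y^{\downarrow}_j\le 1-p^{\ast}$ for all $k$, with equality at $k=n$; and this is exactly the statement that, choosing $\eta$ to be the incoherent state supported on a largest component of $\delta(\varphii)$, the vector $p^{\ast}\delta(\varphii)+(1-p^{\ast})\delta(\eta)$, whose decreasing rearrangement is simply $(p^{\ast}y^{\downarrow}_1+1-p^{\ast},\,p^{\ast}y^{\downarrow}_2,\dots,p^{\ast}y^{\downarrow}_n)$, majorises $\delta(\psi)$. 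The constructive half of the ensemble criterion then supplies incoherent Kraus operators realising $\psi\mapsto\varphii$ with probability $p^{\ast}$, which with the converse proves the claim. A self-contained alternative for this step is to route everything through the coherence--entanglement correspondence: $x$ and $y$ become Schmidt coefficient vectors of auxiliary bipartite states, $\IO$ conversions correspond to LOCC conversions, and Vidal's formula applies verbatim.

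I expect the achievability half to be the main obstacle, for two reasons: one must first pin down the ensemble-level majorization criterion for $\IO$ (the deterministic statement of Theorem~\ref{thm:conv} does not by itself cover branching protocols), and then one must check that the two-outcome stochastic map obtained at the level of diagonals genuinely lifts to an incoherent instrument on $\St_n$ and not merely to a classical channel on $\Delta_{n-1}$. The converse, by contrast, is a soft consequence of the concavity of the tail sums $E_k$ once the ensemble criterion is in hand. A minor bookkeeping point throughout is the standing hypothesis that $\delta(\psi)$ and $\delta(\varphii)$ have $n$ nonzero entries, which keeps every denominator in~\eqref{def:Pi} positive and forces $\Pi(x,y)\in(0,1]$, with the value $1$ attained exactly when $\delta(\psi)\prec\delta(\varphii)$, consistently with Theorem~\ref{thm:conv}.
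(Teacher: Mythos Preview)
The paper does not prove this theorem: it is quoted as a known result from the literature (the citation \cite{Zhu17} appears in the theorem header), and the only commentary is the remark that it ``is the IO counterpart of an analogous result obtained by Vidal \cite{vidal1999} for LOCC conversions.'' There is therefore no proof in the paper to compare against.

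That said, your proposal is exactly the Vidal-type argument one would expect, and it is essentially correct. The converse via concavity of the tail-sum functionals $E_k$ is clean. For the achievability half you have correctly identified the crux: you need the \emph{ensemble} majorization criterion for $\IO$ (i.e.\ $\psi$ can be converted by an incoherent instrument to $\{p_\alpha,\psi_\alpha\}$ iff $\delta(\psi)\prec\sum_\alpha p_\alpha\,\delta(\psi_\alpha)$), which is strictly stronger than the deterministic statement of Theorem~\ref{thm:conv}. This ensemble criterion is indeed established in \cite{Du15,Zhu17}, so citing it is legitimate; your parenthetical suggestion to ``attach a classical flag and apply the deterministic criterion on an enlarged basis'' works for the converse direction but is not quite enough for achievability, where one genuinely needs the explicit incoherent Kraus construction from those references. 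Your two-outcome failure state and the head-sum/tail-sum bookkeeping are correct. The alternative route through the coherence--entanglement correspondence is also valid and is in fact closer to how \cite{Zhu17} actually argues.
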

Theorem \ref{thm:probconv} is the IO counterpart of an analogous result obtained by Vidal \cite{vidal1999} for LOCC conversions.
Note that $\Pi(\delta(\psi),\delta(\varphii))=1$ if and only if $\psi\prec\varphii$.
The theorem still holds if the class $\IO$ is
replaced by $\SIO$ or $\DIO$ as a consequence of Theorem \ref{thm:conv}.

Summing up, the three classes of incoherent operations considered are equivalent
for manipulation of pure states, and they are all governed by majorization
relations. In this paper we only consider pure state transformations. For
simplicity we will always refer to the class $\IO$, but all the results 
also hold for $\SIO$ and $\DIO$.

\section{Random pure states}\label{sec:random}
Let $\psi$ be a random pure state in $\St_n$ distributed according to the unitarily invariant measure. In the incoherent basis $\{\ket{i}\}$,
 $$
 \psi=\sum_{ij}\psi_{i}\overline{\psi_{j}}\ketbra{i}{j},
 $$
where $(\psi_1,\psi_2,\dots,\psi_n)$ is uniformly  distributed in the $n$-dimensional complex unit sphere, $\sum_j |\psi_j|^2 =1$.

Hence, the random vector $\delta(\psi)=(|\psi_1|^2,|\psi_2|^2,\dots,|\psi_n|^2)$ is \emph{uniformly distributed} in the simplex $\Delta_{n-1}$, 
\be
p_{\delta(\psi)}(x)=(n-1)! 1_{x\in\Delta_{n-1}}. 
\label{eq:pdelta}
\ee
 If $\mu$ is a uniform point in $\Delta_{n-1}$, i.e.\ distributed according to~\eqref{eq:pdelta},
  then the component $\mu_k$ is ``typically'' $O(1/n)$. The extreme components lie 
  instead on very different scales. The largest components $\mu^{\downarrow}_1, \mu^{\downarrow}_2,\ldots$ 
  are of size $\log n/n$ with fluctuations of  $O(1/n)$; the smallest components $\mu^{\downarrow}_n, \mu^{\downarrow}_{n-1},\ldots$ are on the much smaller scale $1/n^2$, with fluctuations of  $O(1/n^2)$.
We now give a precise asymptotic descriptions of the extreme statistics of the
uniform distribution on $\Delta_{n-1}$: they are distributed as time-homogenous
Markov chains with explicit (and remarkably simple) transition densities. We
must say that the uniform distribution on the simplex is one of the favourite
topics in geometric/integral probability~\cite{Baci20}, and its relevance in
quantum applications has been already highlighted in the past~\cite{zyczkowskibeng,Majumdar08}.  The following result is probably folklore but we could not trace it in the literature. We report it here since it is crucial for the next analysis. The proof is given in Appendix~\ref{app:proof_propMarkov}.
    \begin{prop}
  \label{prop:Markov}
   Let $\mu=(\mu_1,\mu_2,\ldots,\mu_n)$ be a uniform point in $\Delta_{n-1}$. Denote by $\mu^{\downarrow}$ the decreasing rearrangement of $\mu$. Then, for any fixed integer $k\geq1$, the following hold as $n\to\infty$:
   \begin{itemize}
   \item[(i)] the rescaled vector  of the smallest components $
  \bigl(n^2\mu^{\downarrow}_{n-j+1}\bigr)_{1\leq j\leq k}
  $
  converges in distribution to 
  $
  (V_1,V_2,\dots,V_k)
  $,
  where $(V_j)_{j\geq1}$ is a  Markov chain with initial  and transition densities given by
\be
  	\begin{gathered}\label{eq:V_transit}
  	f_{V_1}(v)=\exp(-v)1_{v\geq0},\\
	f_{V_{j+1}|V_j}(u|v)=\exp(v-u)1_{u\geq v};
	\end{gathered}
\ee

 \item[(ii)]   the rescaled vector   of the largest components $
  \bigl(n\mu^{\downarrow}_j-\log n\bigr)_{1\leq j\leq k}
  $
  converges in distribution to 
  $
  (W_1,W_2,\dots,W_k)
  $,
where $(W_j)_{j\geq1}$ is a Markov chain with initial  and transition densities given by
\be
  \begin{gathered}\label{eq:W_transit}
f_{W_1}(w)=\exp(-\e^{-w}-w),\\
f_{W_{j+1}|W_j}(u|w)=\exp(\e^{-w}-\e^{-u}-u)1_{u\leq w}.
\end{gathered}
\ee
\end{itemize}
    \end{prop}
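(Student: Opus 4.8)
The starting point is the standard stick-breaking / Dirichlet representation of the uniform point on the simplex: if $E_1,\dots,E_n$ are i.i.d.\ standard exponential random variables and $S_n=\sum_i E_i$, then $\mu=(E_1/S_n,\dots,E_n/S_n)$ is uniform on $\Delta_{n-1}$. Since $S_n/n\to1$ almost surely by the law of large numbers, the order statistics of $\mu$ are, up to the deterministic scaling factor $1/S_n\sim 1/n$, the same as the order statistics of the $E_i$ themselves; so both parts reduce to classical extreme-value statements about the largest and smallest of $n$ i.i.d.\ exponentials, which I would make precise with a Slutsky-type argument (prove the claim with $n$ replaced by $S_n$, then transfer).

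For part~(i), the smallest components: $n\min_i E_i$ has exactly an $\mathrm{Exp}(1)$ law, and more generally the vector of the $k$ smallest values of $n$ i.i.d.\ exponentials, rescaled by $n$, converges to the first $k$ points of a Poisson process of rate $1$ on $[0,\infty)$ — equivalently, by the memorylessness of the exponential, to a renewal sequence $V_j=\xi_1+\dots+\xi_j$ with $\xi_i$ i.i.d.\ $\mathrm{Exp}(1)$. That renewal sequence is exactly the Markov chain with the transition densities in~\eqref{eq:V_transit}: $f_{V_1}(v)=e^{-v}$ and $f_{V_{j+1}\mid V_j}(u\mid v)=e^{-(u-v)}1_{u\ge v}$. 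Concretely I would compute the joint density of the $k$ smallest of $n$ i.i.d.\ exponentials (a routine order-statistics formula with a combinatorial prefactor $n!/(n-k)!$ and a surviving factor $(1-\text{small})^{n-k}$), rescale by $n$, and take $n\to\infty$; then replace the scaling $1/n$ by $1/S_n$ using $S_n/n\to1$.

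For part~(ii), the largest components: here the relevant classical fact is that $\max_i E_i - \log n$ converges in distribution to a standard Gumbel, with density $e^{-w}e^{-e^{-w}}$, which matches $f_{W_1}$ in~\eqref{eq:W_transit}. The joint limiting law of the top $k$ order statistics of i.i.d.\ exponentials, centred by $\log n$, is again that of the first $k$ points of a Poisson process — this time the points of an inhomogeneous Poisson process on $\R$ with intensity $e^{-w}\,dw$, ordered decreasingly. Writing down that Poisson point process description and reading off the conditional density of the $(j+1)$-st largest point given the $j$-th gives precisely $f_{W_{j+1}\mid W_j}(u\mid w)=e^{e^{-w}-e^{-u}-u}1_{u\le w}$ (the factor $e^{e^{-w}}$ is the normalisation coming from "no points in $(u,w)$ fails / the process restricted to $(-\infty,w)$"). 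Again I would do the finite-$n$ order-statistics computation with the substitution $x\mapsto \log n + w$, check that the prefactors converge, and finish with the $S_n/n\to1$ replacement.

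The main obstacle is not the limiting identification — the Markov/renewal and Poisson-process structures are clean — but making the two approximation steps rigorous \emph{simultaneously}: one must control the joint convergence of the rescaled order-statistics vector while also dividing by the random $S_n$, and for the largest components one additionally needs that the logarithmic centring $\log n$ versus $\log S_n$ contributes a negligible shift (it does, since $\log S_n - \log n\to0$ a.s.). Packaging this as "finite-$n$ joint density $\to$ limiting joint density pointwise, plus a dominating bound $\Rightarrow$ convergence in distribution" (Scheffé's lemma), and then invoking Slutsky for the $1/S_n$ factor, is the cleanest route; the bookkeeping of the order-statistics prefactors in the $n\to\infty$ limit is the only genuinely technical part.
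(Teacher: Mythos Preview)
Your approach is essentially the paper's: exponential representation $\mu_i=E_i/S_n$, extreme-value asymptotics for the order statistics of i.i.d.\ exponentials (giving the Markov chains $(V_j)$ and $(W_j)$), and a Slutsky-type correction for the random normaliser $S_n$. The Poisson-process language you use is equivalent to the paper's Markov-chain formulation, and the finite-$n$ order-statistics densities you propose to write down are exactly the formulae the paper records in its Appendix~A.

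One point needs sharpening. For part~(ii) you write that the shift ``$\log n$ versus $\log S_n$'' is negligible because $\log S_n-\log n\to0$ a.s. That is true but not the estimate you need. Writing
\[
n\mu^{\downarrow}_j-\log n=\frac{n}{S_n}\bigl(E^{\downarrow}_j-\log n\bigr)+\Bigl(\frac{n}{S_n}-1\Bigr)\log n,
\]
the first term converges by your argument plus $n/S_n\to1$, but the second term requires $\bigl(n/S_n-1\bigr)\log n\to0$ in probability. The law of large numbers alone (equivalently, $\log S_n-\log n\to0$) does not give this; you need a rate, e.g.\ $n/S_n-1=O_P(n^{-1/2})$ from Chebyshev or the CLT, so that the product with $\log n$ vanishes. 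The paper does exactly this via Chebyshev. Once you fix that, your plan goes through.
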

 \par
Note that, by the Markov property, we can write the joint density of  $(V_1,\dots,V_k)$,
\be
f_{V_1,\dots,V_k}(v_1,\dots,v_k)=\exp\left(-v_k\right)1_{0\leq v_1\leq v_2\leq\cdots\leq v_k},
\label{eq:joint_V}
\ee
and the joint density of $(W_1,\dots,W_k)$,
\begin{align}\notag
	&f_{W_1,\dots,W_k}(w_1,\dots,w_k)=\\
	&\quad=\exp\left(-w_1-w_2-\cdots-w_k-\e^{-w_k}\right)1_{w_1\geq w_2\geq\cdots\geq w_k}.
	\label{eq:joint_W}
\end{align}

The next Lemma gives a concrete realization of the Markov chains $(V_j)_{j\geq1}$ and $(W_j)_{j\geq1}$ in terms of discrete-time continuous random walks. In the following, $(X_j)_{j\geq1}$ is a sequence of independent exponential random variables with rate $1$, i.e. $P(X_j \leq x)=1-\exp(-x)$.
 \begin{lem}
 \label{lem:Markov} Let  $(V_j)_{j\geq1}$ and  $(W_j)_{j\geq1}$ be the Markov chains defined in~\eqref{eq:V_transit} and~\eqref{eq:W_transit}, respectively. Then,
 \label{lem:repr}
\begin{align}
  (V_j)_{j\geq1}&\stackrel{D}{=} \left(X_1+\dots+X_j\right)_{j\geq1},
   \label{eq:repr_V}\\
 (W_j)_{j\geq1}&\stackrel{D}{=} \left(-\log\left(X_1+\dots+X_j\right)\right)_{j\geq1},
  \label{eq:repr_W}
\end{align}
where $\stackrel{D}{=}$ denotes equality in distribution.
 \end{lem}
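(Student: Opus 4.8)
The plan is to prove both identities by the same device: exhibit the process on the right-hand side of \eqref{eq:repr_V} (resp. \eqref{eq:repr_W}) as a Markov chain and check that its initial density and one-step transition density are precisely the ones in \eqref{eq:V_transit} (resp. \eqref{eq:W_transit}). Since a Markov chain is determined in law by these two data, this yields equality in distribution of the whole trajectories, i.e. of the processes $(V_j)_{j\geq1}$ and $(W_j)_{j\geq1}$.

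First I would dispatch \eqref{eq:repr_V}. Set $S_j:=X_1+\dots+X_j$. The sequence $(S_j)_{j\geq1}$ is Markov because $S_{j+1}=S_j+X_{j+1}$ with $X_{j+1}$ independent of $(S_1,\dots,S_j)$; its first coordinate $S_1=X_1$ has density $\exp(-v)1_{v\geq0}=f_{V_1}(v)$; and conditionally on $S_j=v$, the variable $S_{j+1}=v+X_{j+1}$ has density $\exp(-(u-v))1_{u\geq v}=\exp(v-u)1_{u\geq v}=f_{V_{j+1}|V_j}(u|v)$. Hence $(S_j)_{j\geq1}\stackrel{D}{=}(V_j)_{j\geq1}$.

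For \eqref{eq:repr_W} I would note that $g\colon x\mapsto-\log x$ is a strictly decreasing $C^1$ bijection of $(0,\infty)$ onto $\R$, applied coordinatewise; this preserves the Markov property (it does not couple distinct times: $\sigma(-\log S_1,\dots,-\log S_j)=\sigma(S_1,\dots,S_j)$ for every $j$), and it transforms densities by the one-variable substitution $v=\e^{-w}$, $\de v=-\e^{-w}\,\de w$, which reverses orderings. From $f_{V_1}$ one gets the initial density $f_{V_1}(\e^{-w})\,\e^{-w}=\exp(-\e^{-w}-w)=f_{W_1}(w)$, and from $f_{V_{j+1}|V_j}$, writing $v=\e^{-w}$ and $u'=\e^{-u}$ so that $u'\geq v\iff u\leq w$, the transition density
\[
f_{V_{j+1}|V_j}(\e^{-u}\mid\e^{-w})\,\e^{-u}=\exp\!\bigl(\e^{-w}-\e^{-u}-u\bigr)1_{u\leq w}=f_{W_{j+1}|W_j}(u\mid w).
\]
Together with \eqref{eq:repr_V} this yields \eqref{eq:repr_W}.

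I do not expect any genuine obstacle here: the proof is elementary once organised around the ``initial law $+$ transition kernel determine a Markov chain'' principle, together with Proposition~\ref{prop:Markov} already identifying the targets. The only points that deserve a line of care are the observation that a fixed deterministic bijection applied separately at each time keeps the process Markov, and the bookkeeping in the change of variables $v\mapsto\e^{-v}$ --- in particular the Jacobian factor $\e^{-w}$ and the flip $1_{u\geq v}\mapsto 1_{u\leq w}$ --- which is exactly what turns the transition law of $(V_j)_{j\geq1}$ into that of $(W_j)_{j\geq1}$.
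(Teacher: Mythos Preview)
Your argument is correct and complete. Note, however, that the paper does not actually supply a proof of this lemma: it is stated as a concrete realisation of the two Markov chains and used without further justification. So there is nothing to compare against in the strict sense.

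That said, the paper does record the explicit joint densities \eqref{eq:joint_V} and \eqref{eq:joint_W} of $(V_1,\dots,V_k)$ and $(W_1,\dots,W_k)$, and these give an alternative one-line verification: the joint density of $(S_1,\dots,S_k)$ with $S_j=X_1+\dots+X_j$ is obtained from $\prod_{i=1}^k\e^{-x_i}1_{x_i\geq0}$ by the unit-Jacobian substitution $x_i=s_i-s_{i-1}$, yielding $\e^{-s_k}1_{0\leq s_1\leq\cdots\leq s_k}$, which is exactly \eqref{eq:joint_V}; the change of variables $s_i=\e^{-w_i}$ then turns this into \eqref{eq:joint_W}. Your route via ``initial law $+$ transition kernel'' is equivalent and arguably cleaner, since it makes the Markov structure and its preservation under the coordinatewise bijection $x\mapsto-\log x$ explicit. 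Either way the content is the same elementary computation.
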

 See Fig.~\ref{fig:cartoon} for a pictorial illustration of Proposition~\ref{prop:Markov} and Lemma~\ref{lem:Markov}.
 
 \begin{figure}[t]
	\includegraphics[width=1\columnwidth]{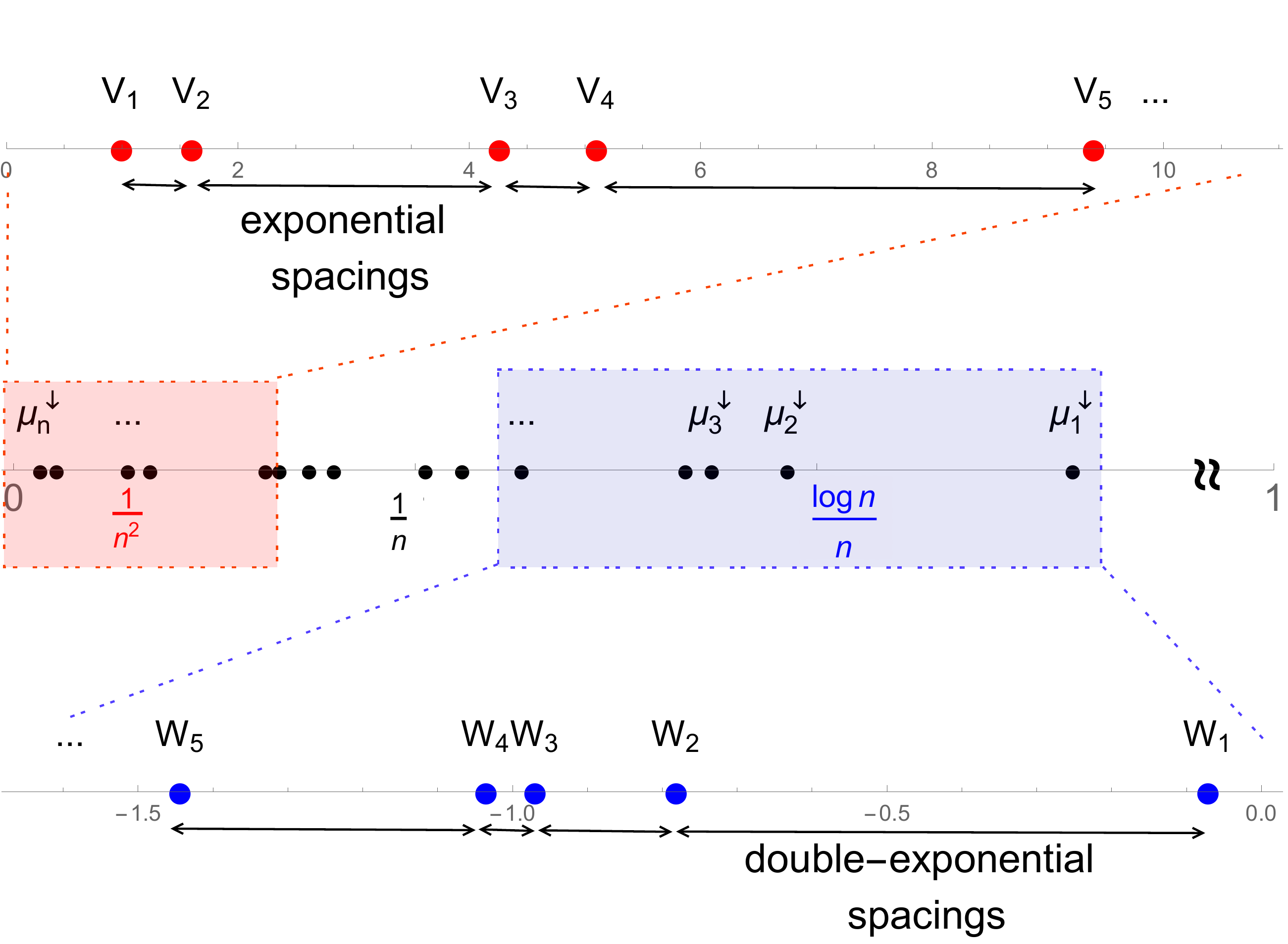}
	\caption{MIDDLE: The $n$ components of a uniform point $\mu$ in $\Delta_{n-1}$ lie in the unit interval $[0,1]$. TOP: For large $n$, the smallest components $\mu^{\downarrow}_n,\mu^{\downarrow}_{n-1},\dots,\mu^{\downarrow}_{n-k+1}$ after a proper rescaling behave statistically as the first $k$ points of a Poisson process $(V_{j})_{j\geq1}$ with exponential spacings. BOTTOM: The largest components $\mu^{\downarrow}_1,\mu^{\downarrow}_{2},\dots,\mu^{\downarrow}_{k}$ after a proper rescaling behave statistically as the first $k$ points of a Poisson process $(W_{j})_{j\geq1}$ with double-exponential (or Gumbel) spacings.} 
	\label{fig:cartoon}
\end{figure}
    
 \section{Volume of the set of IO-convertible states}
 \label{sec:results}

  In 1999, Nielsen~\cite{nielsen1999} conjectured that the relative volume of pairs
of LOCC-convertible bipartite pure states vanishes in the limit of large
dimensions. The precise statement of the conjecture is that for two independent
random points in the simplex with a distribution of random matrix type (see
Sec.~\ref{sub:Entanglement} below), the probability that they are in majorization relation is asymptotically zero. Here we pose a similar question in the theory of coherence: is it true that `most' pairs of pure $n$-dimensional quantum states are \emph{not} IO-convertible if $n$ is large? The answer is `Yes'.
  \subsection{Asymptotics $n\to\infty$}
\begin{thm}\label{thm:Pvanishing}
Let $\psi$ and $\varphii$ be independent random pure states in $\St_n$. Then,
$$
\lim_{n\to\infty} P\left(\psi\prec\varphii\right)=0.
$$
 \end{thm}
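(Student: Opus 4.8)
The plan is to reduce the statement $P(\psi\prec\varphi)\to 0$ to a statement about the smallest components of two independent uniform points in the simplex, and then exploit the Markov-chain description in Proposition~\ref{prop:Markov}. Write $\mu=\delta(\psi)$ and $\nu=\delta(\varphi)$, two independent uniform points in $\Delta_{n-1}$. By definition $\psi\prec\varphi$ means $\mu\prec\nu$, i.e. $\sum_{j=1}^k\mu^{\downarrow}_j\leq\sum_{j=1}^k\nu^{\downarrow}_j$ for all $k$. Equivalently, looking from the bottom, $\sum_{j=1}^k\mu^{\downarrow}_{n-j+1}\geq\sum_{j=1}^k\nu^{\downarrow}_{n-j+1}$ for all $k$ (since both partial sums from the top add to $1$). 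So $\mu\prec\nu$ forces, in particular, the partial sums of the smallest $k$ components of $\mu$ to dominate those of $\nu$, for \emph{every} $k$, and it suffices to find a single value of $k$ (or a finite block of them) for which this domination fails with probability tending to $1$.

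The key step is then: for fixed $k$, by Proposition~\ref{prop:Markov}(i) the rescaled vectors $\bigl(n^2\mu^{\downarrow}_{n-j+1}\bigr)_{1\le j\le k}$ and $\bigl(n^2\nu^{\downarrow}_{n-j+1}\bigr)_{1\le j\le k}$ converge jointly (by independence) in distribution to two \emph{independent} copies $(V_1,\dots,V_k)$ and $(V_1',\dots,V_k')$ of the exponential-spacings chain. Hence
\be
P(\psi\prec\varphi)\ \leq\ P\Bigl(\textstyle\sum_{j=1}^i \mu^{\downarrow}_{n-j+1}\geq \sum_{j=1}^i \nu^{\downarrow}_{n-j+1}\ \text{for all } i\leq k\Bigr)\ \xrightarrow[n\to\infty]{}\ P\Bigl(\textstyle\sum_{j=1}^i V_j\geq \sum_{j=1}^i V_j'\ \text{for all } i\leq k\Bigr).
\ee
Using the realization of Lemma~\ref{lem:Markov}, $\sum_{j=1}^i V_j=\sum_{j=1}^i(i-j+1)X_j$ is a positive linear combination of i.i.d.\ rate-$1$ exponentials, and similarly for $V'$; the event on the right is that one such random walk stays weakly above an independent identically-distributed copy for $k$ consecutive steps. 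Since the increments are continuous and the two walks are i.i.d., a symmetry/exchangeability argument (swapping the two walks, or comparing with the event that the roles are reversed) shows this probability is bounded by something that decays in $k$ — concretely, at the very least it is at most $1/2$ at $i=1$, and one can iterate: conditioning on the partial sums up to step $i$, the probability that the $\mu$-walk again dominates at step $i+1$ is strictly less than $1$ and bounded away from $1$ uniformly, because the fresh exponential increments entering the two sides are i.i.d.\ and the coefficient of the new increment is the same ($=1$ after the appropriate telescoping of consecutive partial-sum differences). Letting $k\to\infty$ after $n\to\infty$ drives the bound to $0$, which gives the claim.

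The main obstacle is the last quantitative point: showing that the limiting probability $P(\sum_{j\le i}V_j\geq\sum_{j\le i}V_j'\ \forall i\le k)$ genuinely tends to $0$ as $k\to\infty$, rather than to a positive constant. This is a persistence-type estimate for the difference walk $D_i=\sum_{j\le i}(V_j-V_j')$, which is \emph{not} a sum of i.i.d.\ increments (the increments $V_i-V_{i-1}$ are positively correlated along the chain), so the classical Sparre~Andersen $\binom{2k}{k}4^{-k}$ bound does not apply verbatim. I would handle this by writing $D_i$ in the $X$-coordinates, noting $D_i-D_{i-1}=\sum_{j\le i}(X_j-X_j')$ has the fresh i.i.d.\ pair $(X_i,X_i')$ contributing a mean-zero symmetric continuous increment independent of the past with a fixed positive coefficient, and invoking a Doob/martingale or a direct conditional-probability argument: $P(D_{i}\ge 0\mid \mathcal F_{i-1})\le 1-c$ for a constant $c>0$ whenever $D_{i-1}$ is in a suitable range, together with a crude bound ruling out $D_{i-1}\to+\infty$. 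Summing (multiplying) these conditional bounds yields geometric decay in $k$. An alternative, cleaner route is to observe that on the event $\{\psi\prec\varphi\}$ one in fact needs $n^2\mu^{\downarrow}_n\ge n^2\nu^{\downarrow}_n$ AND a chain of further inequalities, and to couple with the reversed event to get an exact exchangeability bound $\le 1/(k+1)$ à la ballot problem; I expect the ballot-type argument, suitably adapted to the continuous exchangeable setting, to be the slickest way to finish, and identifying the right exchangeable structure is the part that needs care.
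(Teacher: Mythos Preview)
Your reduction is correct and matches the paper exactly: rewrite majorization in terms of the smallest components, truncate to the first $k$ conditions, and pass to the $n\to\infty$ limit using Proposition~\ref{prop:Markov} to obtain
\[
\limsup_{n\to\infty} P(\psi\prec\varphi)\ \leq\ p_k:=P\Bigl(\sum_{j=1}^i V_j\geq\sum_{j=1}^i V_j'\ \text{for all } i\leq k\Bigr),
\]
which is precisely the persistence probability of the integrated random walk $I_i=\sum_{j\leq i}\widetilde V_j$ with $\widetilde V_j=\sum_{\ell\leq j}(X_\ell-X'_\ell)$. So far so good.

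The gap is in your argument that $p_k\to 0$. Your first route claims $P(D_i\geq 0\mid\mathcal F_{i-1})\leq 1-c$ on a ``suitable range'' and then a ``crude bound ruling out $D_{i-1}\to+\infty$''. Concretely, $D_i=D_{i-1}+\widetilde V_{i-1}+\widetilde X_i$, so conditional on $\mathcal F_{i-1}$ the probability $P(D_i\geq0)$ can be arbitrarily close to $1$ whenever $D_{i-1}+\widetilde V_{i-1}$ is large; and on the persistence event there is no a~priori reason this quantity stays bounded (indeed $\widetilde V_{i-1}$ has variance growing linearly in $i$). The product-of-conditionals scheme therefore does not yield geometric decay without substantial further work. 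Your alternative ballot/exchangeability route is also not available: the increments $\widetilde V_i$ of $D_i$ are neither i.i.d.\ nor exchangeable (their variances grow with $i$), so Sparre~Andersen and cycle-lemma arguments do not apply to $(I_i)$, and you yourself flag that ``identifying the right exchangeable structure is the part that needs care''---here there isn't one.

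The paper closes this gap by a different mechanism: it shows that $A=\{\liminf_k I_k/(k\log k)\geq 0\}$ is (up to null sets) a tail event for the i.i.d.\ sequence $(\widetilde X_j)$, so $P(A)\in\{0,1\}$ by Kolmogorov's 0--1 law; then the Lindeberg--Feller CLT gives $I_k/\sqrt{\operatorname{Var}I_k}\Rightarrow\mathcal N(0,1)$ with $\operatorname{Var}I_k\sim ck^3$, which forces $P(A)\leq 1/2<1$ and hence $P(A)=0$. Since $p_k\downarrow P(\inf_k I_k\geq0)\leq P(A)$, this yields $p_k\to 0$. That 0--1 law plus CLT step is the missing ingredient in your proposal.
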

 \begin{Proof}  We use the shorter notation $\mu:=\delta(\psi)$ and $\mu':=\delta(\varphii)$. 
It turns out to be convenient to write the majorization relation $\mu\prec\mu'$ as
 \be
 \sum_{i=n-j+1}^{n}\mu^{\downarrow}_{i}\geq  \sum_{i=n-j+1}^{n}\mu'^{\downarrow}_{i}, \,\text{for all $ j= 1, \dots, n$},
 \label{eq:maj}
 \ee
 by using  the normalization condition $\sum_{i=1}^n \mu^{\downarrow}_{i} =\sum_{i=1}^n \mu'^{\downarrow}_{i}$.
 The idea of the proof,  inspired by~\cite{Pittel99}, is to show that 
 $$
 \lim_{k\to\infty}\lim_{n\to\infty}P\left(\text{$\mu,\mu'$ meet the first $k$ conditions in~\eqref{eq:maj}}\right)=0.
 $$
From Proposition~\ref{prop:Markov},  
$$
\text{$(\mu^{\downarrow}_{n},\mu^{\downarrow}_{n-1},\ldots,\mu^{\downarrow}_{n-k+1})$ is asymptotic to $\left(n^{-2}V_j\right)_{1\leq j\leq k}$},
$$
as $n\to\infty$.
By  Lemma~\ref{lem:Markov}, we have the representation  $(V_j)_{j\geq1}\stackrel{D}{=}\left(X_1+\dots+X_j\right)_{j\geq1}$. 
 Analogous representation holds for the $k$ smallest components of $\mu'$ with their own sequence  $(V'_j)_{j\geq1}\stackrel{D}{=}\left(X'_1+\dots+X'_j\right)_{j\geq1}$. 
 Consider the probabilities ($1\leq k\leq n$)
$$
\pi_{n,k}:=  P\Biggl(
\sum_{i=n-j+1}^n\mu^{\downarrow}_{i}\geq \sum_{i=n-j+1}^n\mu'^{\downarrow}_{i},\,\text{for all $1\leq  j\leq k$}\Biggr).
$$
 Of course,
$$
 \pi_{n,n}=P\left(\mu\prec\mu'\right),\quad\text{and}\quad  
 \pi_{n,n}\leq \pi_{n,k}.
$$ 
By taking the limit $n\to\infty$  we get

\begin{align*}
p_k:=\lim_{n\to\infty}\pi_{n,k}
&=P\left(\sum_{i=1}^j V_i\geq \sum_{i=1}^j V_i',
\text{for all $1\leq  j\leq k$}\right).
\end{align*}
It is clear that 
$$
0\leq\limsup_n\pi_{n,n}\leq\lim_{k\to\infty}p_k.
$$
Hence, to prove that $\pi_{n,n}\to0$ as $n\to\infty$, it is enough to show that $p_k\to0$ as $k\to\infty$.  
The sequence $\widetilde{V}_k:=(V_k-V'_k)=\sum_{j=1}^k\widetilde{X}_j$, $k\geq1$ is a time-discrete continuous random walk with independent steps $\widetilde{X}_j:=(X_j-X'_j)$ distributed according to the two-side exponential density $(1/2)\exp(-|x|)$. The process $I_k:=\sum_{j=1}^k\widetilde{V}_j$, $k\geq1$ is the corresponding \emph{integrated random walk} (IRW). Hence, $p_k$ is the so-called \emph{persistence probability} above the origin of the IRW,
\be
p_k=P\left(\min_{1\leq j\leq k}I_j\geq0\right).
\label{eq:pers_p_k}
\ee
The proof that the persistence probability of the IRW asymptotically vanishes, 
\be
p_k\to0, \quad \text{as } k\to\infty,
\label{eq:ppvanish}
\ee
follows from the Lindeberg-Feller central limit theorem and the Kolmogorov 0-1
law, and is given in Appendix~\ref{app:proof_claim}.
 \hfill\qed
 \end{Proof}
It might seem that, having to deal with i.i.d. variables $\widetilde{X}_j$,
the proof that $p_k\to0$ is straightforward. Note however, that the integrated
random walk $(I_k)_{k\geq1}$ is not Markov ($I_k$ depends on all variables
$\widetilde{X}_j$, $j\leq k$) and this explains why some analysis is required.

We should also mention that \eqref{eq:ppvanish} is a direct consequence of several
persistence results for IRW~\cite{Dembo13,Vysotsky07,Vysotsky14}, i.e.
asymptotic estimates for the sequence $p_k$ in
\eqref{eq:pers_p_k}. For our purposes however, we do not
really need the precision of those asymptotic statements and this is the reason
for including in Appendix~\ref{app:proof_claim}  a proof of \eqref{eq:ppvanish} based on elementary probability.

The proof strategy in Theorem~\ref{thm:Pvanishing} is based on bounding $\pi_{n,n}$ by a sequence $p_k$ independent on $n$, and therefore gives no information on the rate  of decay of $\pi_{n,n}$ to zero. Some  insights can be obtained from the perspective of persistence probabilities as discussed in the next section.
\begin{figure}[t]
	\includegraphics[width=1\columnwidth]{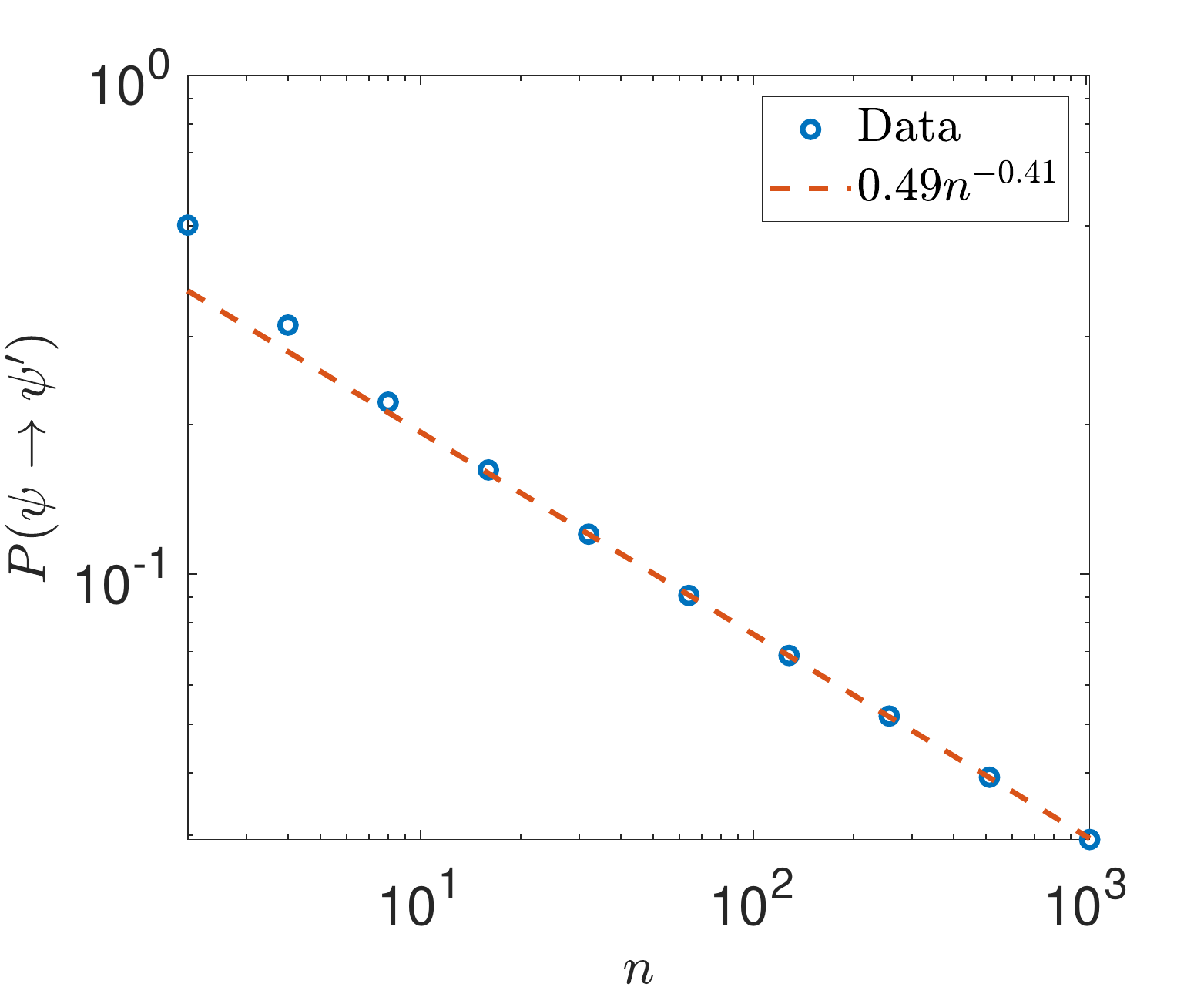}
	\caption{The probability of incoherent convertibility $P(\psi\prec \varphii)$ for $\psi$ and $\varphii$ independently chosen from the uniform distribution on pure states of $\St_n$.  The fit (dotted line) confirms an algebraic decay~\eqref{eq:thetadecay} with exponent $\theta = 0.4052$. Here $n=2,4,8,\ldots, 1024$. }
	\label{fig:Nielsen} 
\end{figure}

\subsection{Majorization, persistence probabilities and the arcsine law}

We next turn our attention to the convergence  rate   of $P(\psi\prec\psi')$ to $0$.
For two random pure states ${\psi},{\varphii}$ in $\St_n$, the vector $\tilde{\delta}=(\tilde{\delta}_k)_{1\leq k\leq n}$ with
\begin{equation}\label{eq:randomSteps}
	\tilde{\delta}_k=\delta(\psi')^\downarrow_k-\delta(\psi)^\downarrow_k,
\end{equation}
defines a continuous random walk $(S_k)_{0\leq k\leq n}$, started at $S_0:=0$, with steps $\tilde{\delta}_k$'s,
\begin{equation}\label{eq:randomWalk}
S_k:=\sum_{j=1}^k \tilde{\delta}_j \qquad 1\leq k\leq n.
\end{equation}
Note that $S_n=0$ (the process is a random bridge).
The majorization condition can be interpreted as the persistence (above the origin) of the random walk. Hence,
\begin{equation*}
	P(\psi\prec \psi') =P\left(\min_{1\leq k\leq n}S_k\geqslant 0\right).
\end{equation*}
Persistence probabilities for random processes have been widely studied in statistical physics and probability. Certain exactly solvable models (that include symmetric random walks, classical random bridge, integrated random walks, etc.), and numerical study of many other models, showed that in the general case the persistence probability above the origin decays \emph{algebraically} as $b n^{-\theta}$, for large $n$.  The so-called \emph{persistence exponents}  $\theta$ of a process, is typically very difficult to compute explicitly if the process is not Markov, although $\theta$ is believed to be distribution free within a universality class. 

It is therefore natural to expect that $P(\psi\prec \psi')$ also decays to zero as a power of $n$. In \figurename{~\ref{fig:Nielsen}} we show the results of numerical estimates of $P(\psi\prec \psi')$ (obtained from $10^6$ realisations of $\psi$, $\psi'$) for increasing values of $n$. The plot in logarithmic scale shows quite convincing evidence that 
\be
P(\psi\prec \psi') \sim b n^{-\theta}.
\label{eq:thetadecay}
\ee
 The value of the persistence exponent obtained from a numerical fit is $\theta=0.4052$, with $\sigma_\theta=0.0028$.

Note that the process $(S_k)_{0\leq k\leq n}$ is not Markov, and quite different from most familiar discrete-time random processes (the steps $\tilde{\delta}_k$'s in~\eqref{eq:randomSteps} are neither independent and identically distributed, as in a classical random walk,  nor distribution-invariant under permutations, as in a classical random bridge). One can nevertheless try to compute certain statistics related to the persistence of $S_k$, for instance the \emph{time spent above the origin}. Denote this time by $N_n:=\#\{k\leq n\colon S_k\geq0\}$. For classical symmetric random walks, the statistics of  $N_n$ is universal, and its limit is the well-known \emph{arcsine law}~\cite{Sparre-Andersen}. Surprisingly, numerical results (see \figurename{~\ref{fig:clauses}}) show that the fraction of time spent above $0$ by $S_k$ is also asymptotically described by the arcsine law,

		\be
		\lim_{n\to\infty}P\left(\frac{N_n}{n}\leq t\right)= \frac{2}{\pi}\arcsin\left(\sqrt{t}\right).
		\ee

\begin{figure}[t]
	\includegraphics[width=1\columnwidth]{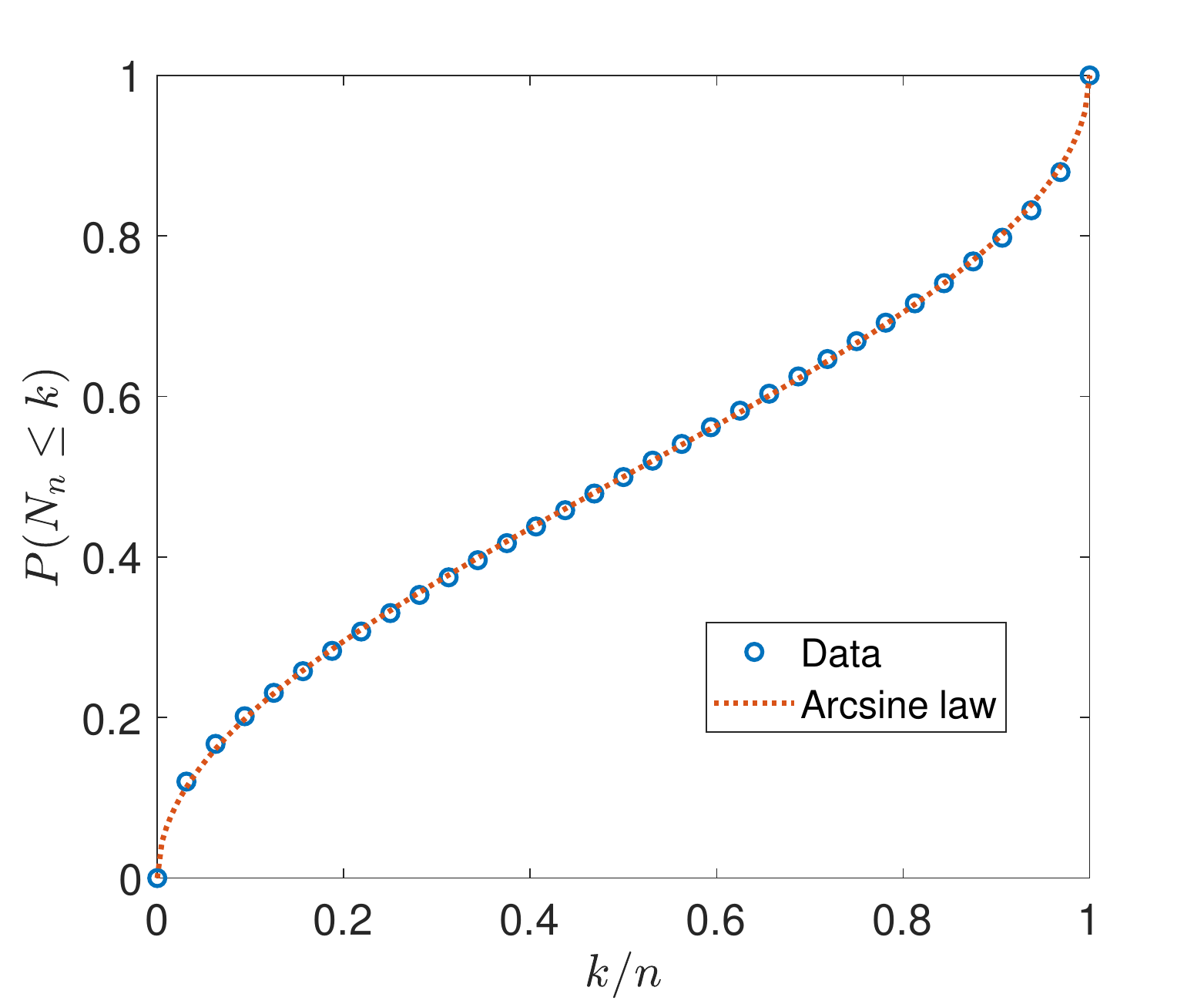}
	\caption{Probability distribution of $N_n$, the time spent by the
process $(S_k)$ above the origin, compared with the arcsine law. Here $n=32$.} 
	\label{fig:clauses}
\end{figure}

\subsection{Limit distribution of the maximal success probability of IO-conversion}
The maximal success probability of IO-conversion of state $\psi$ into $\varphii$ is $\Pi(\mu,\mu')$, where $\mu = \delta(\psi)$, $\mu'=\delta(\varphii)$ are the diagonal entries of $\psi$, $\varphii$ (Theorem~\ref{thm:probconv}). Theorem~\ref{thm:Pvanishing} can be rephrased as the statement that if $\mu,\mu'$ are independent uniform points in $\Delta_{n-1}$, then 
$$
P(\Pi(\mu,\mu')=1)\to0,\quad\text{ as $n\to\infty$.}
$$ 
In our previous work~\cite{majorization} on the LOCC-convertibility for random states, we conjectured a connection between the asymptotic fluctuations of the smallest component of random probability vectors $\lambda,\lambda'$ and the scaling limit of the variable $\Pi(\lambda,\lambda')$, when $\lambda$, $\lambda'$ are independent spectra of fixed-trace Wishart random matrices. Translated in this setting, the precise statement is that, if for some scaling constants $a_n$, $b_n$, the variable
\be
a_n\frac{\mu_{n}^{\downarrow}}{\mathbb{E}[\mu_{n}^{\downarrow}]}+b_n
\label{eq:scaling_min}
\ee
has a nontrivial limit in distribution, then, with the same constants,
\be
P(a_n\Pi(\mu,\mu')+b_n\leq p) 
\label{eq:scaling_Vid}
\ee
has a nontrivial limit as $n\to\infty$.

The smallest component $\mu_n^{\downarrow}=\delta(\psi)^{\downarrow}_n$ has probability density
$$
p_n(x)=n^2(1-nx)^{n-1}1_{0\leq x\leq 1/n}.
$$
The average and variance of $\mu_n^{\downarrow}$ are
$$
\mathbb{E}[\mu_n^{\downarrow}]=\frac{1}{n(n+1)},\quad
\operatorname{Var}[\mu_n^{\downarrow}]=\frac{1}{n(n+1)^2(n+2)}.
$$
Hence, the fluctuations of $\mu_n^{\downarrow}$ relative to the mean are asymptotically bounded,
$\operatorname{Var}[\mu_n^{\downarrow}]^{\frac{1}{2}}/\mathbb{E}[\mu_n^{\downarrow}]=O(1)$, and therefore we can take $a_n=1$ and $b_n=0$ in~\eqref{eq:scaling_min}. The conjectural statement~\eqref{eq:scaling_Vid} in this case says that the distribution function
\begin{equation}\label{eq:distF}
	F_n(p)=P(\Pi(\mu,\mu')\leqslant p)
\end{equation}
should have a scaling limit. Indeed, we found numerically that, for large $n$, the function $F_n(p)$ tends to a limit distribution, as shown in \figurename{~\ref{fig:distPiMarkovChain}}.

Here we push further our previous conjecture and we propose that as $n\to\infty$
the distribution of the random variable $\Pi(\mu,\mu')$ is determined by the
asymptotic behaviour of the smallest components of $\mu$, $\mu'$  only. Any
fixed block of the order statistics  $n^2\mu^{\downarrow}_{n-j+1}$,
$j=1,\dots,k$ is asymptotic to the first $k$ components of a Poisson process
$(V_{j})_{j\geq1}$; similarly, $n^2\mu^{\downarrow}_{n-j+1}$ is asymptotic to
its own independent copy $(V_{j}')_{j\geq1}$. Hence, we conjecture that
$\Pi(\mu,\mu')$ is asymptotically distributed as 
\be
\Pi^{\infty}(V,V'):=\inf_{k\geq1}\frac{\sum_{j=1}^kV_j}{\sum_{j=1}^kV_j'},
\label{eq:limit_Vid}
\ee
where  $(V_{j})_{j\geq1}$ and  $(V_{j}')_{j\geq1}$ are two independent copies of a Poisson process with rate $1$ (i.e. point processes with independent exponential spacings). In formulae, if we denote by $F_{\infty}(p):=P(\Pi^{\infty}(V,V')\leqslant p)$ the distribution function of $\Pi^{\infty}(V,V')$, we claim that
\be
\lim_{n\to\infty}F_n(p)=F_{\infty}(p).
\label{eq:conj_lim}
\ee
We report in \figurename{~\ref{fig:distPiMarkovChain}} the result of numerical simulations of $\Pi(\mu,\mu')$  and $\Pi^{\infty}(V,V')$ on samples of $5\cdot 10^5$ pairs of random probability vectors $\mu$, $\mu'$ in $\Delta_{n-1}$, and pairs of random processes $V$ and $V'$. The agreement between the corresponding distributions $F_n(p)$ for large $n$, and $F_{\infty}(p)$ is quite convincing of the correctness of~\eqref{eq:limit_Vid}-\eqref{eq:conj_lim}. 
\begin{figure}[t]
	\centering
	\includegraphics[width=1\columnwidth]{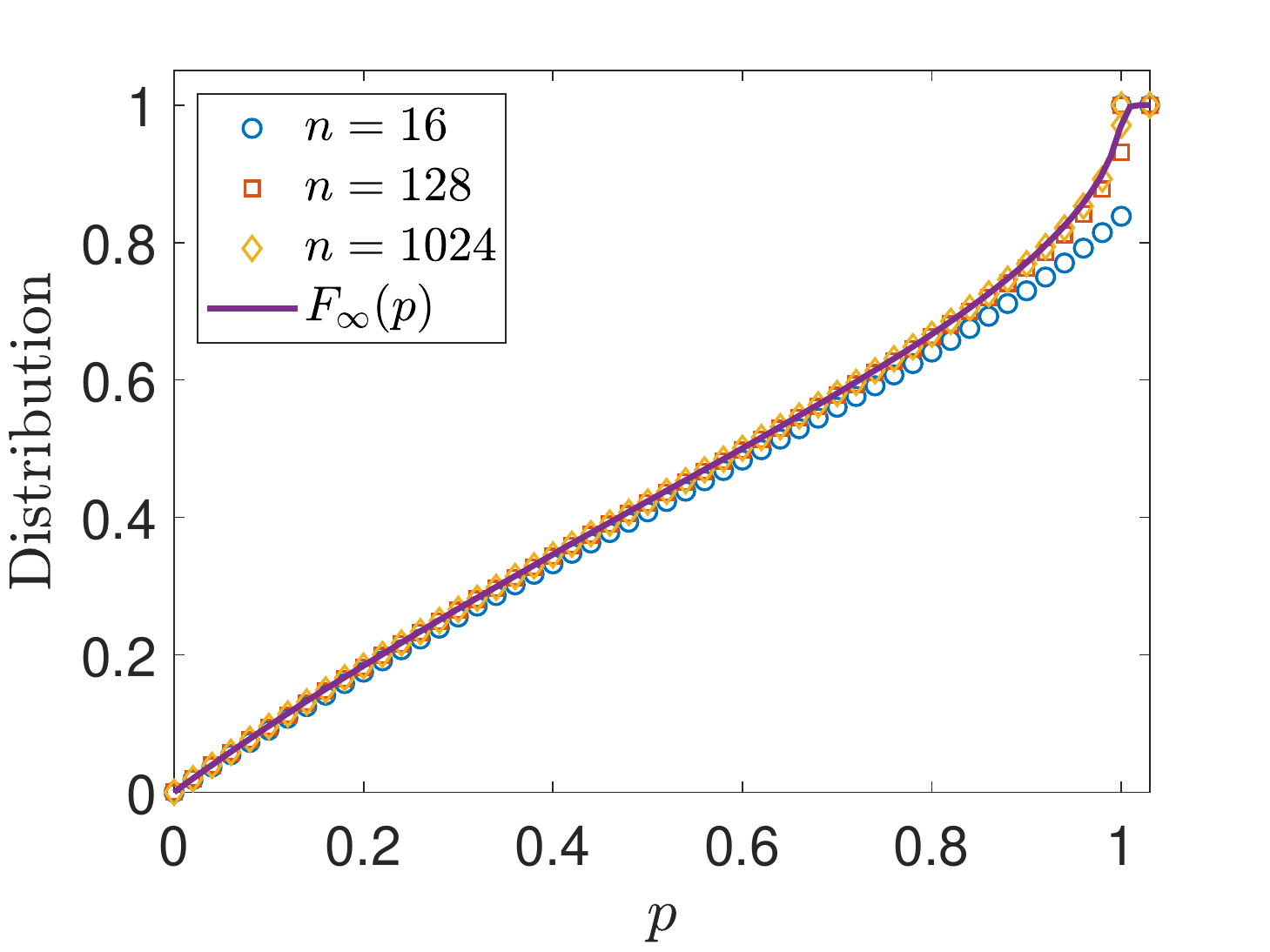}
	\caption{Distribution $F_n(p)$ for various values of $n$ vs the limit distribution $F_{\infty}(p)$ (distribution functions computed from numerical simulations).}
	\label{fig:distPiMarkovChain}
\end{figure}

\section{Concluding remarks}
\label{sec:conclusion}
\subsection{Likelihood of comparability in algebraic combinatorics}
In this paper we proved that the probability that two independent random points uniformly distributed in the unit simplex are in the majorization relation is asymptotically zero as $n\to\infty$. A similar question in the discrete setting was posed back in 1979 by MacDonald~\cite[Ch.1.1, Ex.18]{Macdonald}: for two integer partitions of $n$, chosen  uniformly at random, and independently, is it true that the probability that they are in majorization relation (a.k.a. \emph{dominance order}) is zero as $n\to\infty$? In 1999, Pittel~\cite{Pittel99}  proved the positive answer to Macdonald's question. In the proof of Theorem~\ref{thm:Pvanishing}  we emulated the main ideas exposed in~\cite{Pittel99}.  We mention however a simplification that occurs in the continuous setting. Pittel considered the first $k$ conditions for majorization involving the $k$ largest components of random integer partitions. They are asymptotic to Markov chains $(W_j)_{j\geq1}$ of Lemma~\ref{lem:Markov} with double-exponential spacings. In the continuous setting of random points in the simplex, it is fairly easy to obtain the asymptotics of the smallest components. Hence, here we considered instead the first $k$ conditions~\eqref{eq:maj} involving the smallest components of $\mu^{\downarrow}$, and this reduces the problem to the persistence probability of an integrated random walk $I_k=\sum_{j=1}^k\sum_{i=1}^j\widetilde{X}_i$, where the increments $\widetilde{X}_i$ have two-sided exponential distribution. This choice makes the analysis simpler compared to the discrete setting for integer partitions.
\par
\subsection{Entanglement theory and LOCC-convertibility}
\label{sub:Entanglement}
There is a difference between the statement proved in Theorem
\ref{thm:Pvanishing} for the theory of coherence and Nielsen's conjecture for
entanglement theory. The LOCC-convertibility criterion for bipartite systems is
the majorization relation for spectra of reduced density matrices. In the random
setting, those spectra are not uniformly distributed in the simplex; instead, they follow a random-matrix-type density in $\Delta_{n-1}$
$$
p_{\mathrm{RMT}}(x)=c_{n,m}\prod_{1\leq i<j\leq
n}(x_i-x_j)^2\prod_{k=1}^nx_k^{m-n}1_{x\in\Delta_{n-1}},
$$
where $n$ and $m$ are the dimensions of the subsystems (see~\cite[Eq.~(30)]{majorization}).
In particular, for a point distributed according to $p_{\mathrm{RMT}}$ it is no longer true that the largest/smallest components are 
asymptotically described by point processes with independent spacings (their 
statistics is given instead by scaling limits at the edges of random matrices,
 known as Airy and Bessel point processes), and this complicates considerably the
analysis.   
\par
We also note that the proof that $P(\mu\prec \mu')\to 0$ as $n\to\infty$
presented here can be adapted to the
case where $\mu,\mu'$ are independent copies of points in the unit simplex picked
according to a more general Dirichlet distribution
$$
p_{\mathrm{Dir}(\alpha)}(x)=\frac{\Gamma(n\alpha)}{\Gamma(\alpha)^n}
\prod_{k=1}^nx_k^{\alpha-1}1_{x\in\Delta_{n-1}}.
$$
It would be interesting to see if this
distribution appears naturally in the theory of random quantum states. 

\subsection{Other Resource Theories}
Majorization criteria play an important role also in other resource theories, such as the resource theory of purity, which is also closely connected to the resource theory of coherence (see e.g.~\cite{Streltsov18}). It would be interesting to investigate the applicability of our methods to these other scenarios, even beyond the pure state case.

\section{Acknowledgments}
\textit{Acknowledgments - } 
PF, GF and GG are partially supported by Istituto Nazionale di Fisica Nucleare (INFN) through the project ``QUANTUM".   
GF is supported by the Italian Ministry MIUR-PRIN project ``Mathematics of active materials: From mechanobiology to smart devices '' and by the FFABR research grant. FDC, PF, GF and GG are partially supported by the Italian National Group of Mathematical Physics (GNFM-INdAM).
FDC wishes to thank Ludovico Lami and Vlad Vysotskyi for valuable correspondence.

\appendix

\begin{widetext}

\section{Order statistics of i.i.d. random variables and Markov property}
We collect here a series of more or less known results about order statistics of independent random variables. 
In the following, $X_1,X_2,\dots$ are independent and identically distributed (i.i.d.) random variables with distribution function $F(x):=P(X_1\leq x)$. We always assume that they have a density $f(x)=F'(x)$. 
\par
For a finite family $X_1,X_2,\dots,X_n$, the \emph{order statistics} $X_k^{\downarrow}$, $k\leq n$, are the rearrangements of the variables in nonincreasing order, i.e. $X^{\downarrow}_1\geq X_2^{\downarrow}\geq\cdots\geq X_n^{\downarrow}$. Of course, the order statistics are \emph{not} i.i.d. random variables.

Under the previous assumptions on the distribution of the $X_i$'s, the order statistics have a density. The following lemma gives the explicit formulae that we need for our calculations.
\begin{lem} Let $X_1,X_2,\dots,X_n$ as above. Then,
\begin{itemize}
\item[i)] The density of $X^{\downarrow}_k$  is
\be
f_{X^{\downarrow}_k}(x_k)=\frac{n!}{(n-k)!(k-1)!}F(x_k)^{n-k}f(x_k)(1-F(x_k))^{k-1};
\label{eq:density_1}
\ee
\item[ii)] The joint density of $(X^{\downarrow}_k,X^{\downarrow}_l)$, for $k\leq l$ is
\be
f_{X^{\downarrow}_k,X^{\downarrow}_l}(x_k,x_l)=\frac{n!}{(n-l)!(l-k-1)!(k-1)!}F(x_l)^{n-l}f(x_l)(F(x_k)-F(x_l))^{l-k-1}f(x_k)(1-F(x_k))^{k-1}
\ee
for $x_k\geq x_l$, and zero otherwise;
\item[iii)] The joint density of the $k$ largest variables $(X^{\downarrow}_1,X^{\downarrow}_2,\ldots,X^{\downarrow}_{k})$ is
\be
f_{X^{\downarrow}_1,X^{\downarrow}_2,\dots X^{\downarrow}_{k}}(x_1,x_2,\dots,x_{k})=\frac{n!}{(n-k)!}f(x_1)f(x_2)\cdots f(x_{k})F(x_{k})^{n-k},
\label{eq:density_12k1}
\ee
for $x_1\geq x_2\geq\cdots\geq x_{k}\geq 0$, and zero otherwise;
\item[iv)] The joint density of the $k$ smallest variables $(X^{\downarrow}_{n-k+1},\ldots,X^{\downarrow}_{n-1}, X^{\downarrow}_n)$ is
\be
f_{X^{\downarrow}_{n-k+1}, \dots, X^{\downarrow}_{n-1}, X^{\downarrow}_n}(x_{n-k+1}, \dots, x_{n-1}, x_n)=\frac{n!}{(n-k)!} \left(1-F(x_{n-k+1})\right)^{n-k} f(x_{n-k+1})\cdots f(x_{n-1}) f(x_n) ,
\label{eq:density_12k2}
\ee
for 
$x_{n-k+1} \ge x_{n-k+2} \ge\cdots\ge  x_n \ge  0$, and zero otherwise.
\end{itemize}
\end{lem}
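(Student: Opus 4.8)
The plan is to derive all four formulae from a single source: the joint density of the complete vector of order statistics. Since $X_1,\dots,X_n$ are i.i.d.\ with density $f$, the vector $(X_1,\dots,X_n)$ has density $f(x_1)\cdots f(x_n)$ on $\R^n$, and the map sending a vector to its decreasing rearrangement is $n!$-to-one away from a set of measure zero; hence the vector of order statistics $(X^{\downarrow}_1,\dots,X^{\downarrow}_n)$ has density
\be
f_{X^{\downarrow}_1,\dots,X^{\downarrow}_n}(x_1,\dots,x_n)=n!\,f(x_1)\cdots f(x_n)\,1_{x_1\geq x_2\geq\cdots\geq x_n}.
\ee
The only other ingredient is the elementary identity, valid for $a\geq b$ (with the conventions $F(-\infty)=0$, $F(+\infty)=1$),
\be
\int_{\{a\geq t_1\geq\cdots\geq t_r\geq b\}}f(t_1)\cdots f(t_r)\,\de t_1\cdots\de t_r=\frac{(F(a)-F(b))^r}{r!},
\ee
proved by symmetrising over the $r!$ orderings of $t_1,\dots,t_r$ and recognising the integrand as the density of $r$ i.i.d.\ variables all required to lie in $[b,a]$.

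Granting these two facts, each part of the Lemma follows by integrating out the unwanted coordinates in turn. For part (iii) I would integrate the joint density over $x_{k+1}\geq\cdots\geq x_n$ subject only to $x_k\geq x_{k+1}$, which by the identity (with $a=x_k$, $b=-\infty$, $r=n-k$) contributes the factor $F(x_k)^{n-k}/(n-k)!$ and yields \eqref{eq:density_12k1}. For part (iv) I would instead integrate over $x_1\geq\cdots\geq x_{n-k}$ subject only to $x_{n-k}\geq x_{n-k+1}$, giving (with $a=+\infty$, $b=x_{n-k+1}$, $r=n-k$) the factor $(1-F(x_{n-k+1}))^{n-k}/(n-k)!$ and hence \eqref{eq:density_12k2}. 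For parts (i) and (ii) I would integrate out every coordinate except the one (resp.\ the two) being retained: coordinates lying above the highest retained one contribute a power of $1-F$, those lying below the lowest retained one a power of $F$, and those trapped strictly between two consecutive retained coordinates a power of the increment $F(x_k)-F(x_l)$; dividing the leading $n!$ by the various factorials produced in this way reassembles exactly the multinomial prefactors written in \eqref{eq:density_1} and in part (ii).

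I do not expect a genuine obstacle, since the statements are classical; the work is entirely bookkeeping. The one place that needs a little care is checking that the factorials generated by the successive one-dimensional integrations combine with the global $n!$ to give precisely the stated coefficients (for instance $n!/((n-l)!(l-k-1)!(k-1)!)$ in part (ii)), and that the support constraints $x_1\geq\cdots\geq x_k\geq0$ and their analogues are propagated correctly through the marginalisations. As an alternative one may give a direct combinatorial proof: the event $X^{\downarrow}_k\in[x_k,x_k+\de x]$ asks that exactly one of the $n$ variables fall in this window, $k-1$ exceed it and $n-k$ fall below it, which happens with ``probability'' $\binom{n}{1,\,k-1,\,n-k}f(x_k)\,\de x\,(1-F(x_k))^{k-1}F(x_k)^{n-k}$, and the multi-point formulae arise the same way with additional groups; I would include this as a remark.
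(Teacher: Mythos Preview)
Your proposal is correct. The paper's own proof is a one-paragraph sketch that uses precisely the combinatorial argument you include at the end as an alternative remark: for part (i), one of the $n$ variables lands at $x_k$, exactly $k-1$ exceed it and $n-k$ fall below it, with $n\binom{n-1}{k-1}$ ways to assign the roles; the other parts are left to a reference. So your marginalisation-from-the-full-joint-density route is a slightly different (and more systematic) derivation than the paper's, but both are standard and your write-up already contains the paper's argument as a side remark.
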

\begin{proof}
The proof is rather elementary  (see, e.g.~\cite{ABN08}). We sketch only the proof of Part i) to give a flavour of the type of arguments involved. The probability that $X^{\downarrow}_k$ is in $x_k$, is the probability that, among $X_1,\ldots, X_n$:  one is in $x_k$ (this gives a factor $f(x_k)$); exactly $(k-1)$ are larger than $x_k$ (this gives the factor $(1-F(x_k))^{k-1}$);  the remaining $(n-k)$ variables are smaller than $x_k$ (corresponding to the factor $F(x_k)^{n-k}$). There are $n\binom{n-1}{k-1}$ ways to partition the $n$ variables in that manner. 
\end{proof}
\begin{prop}
\label{prop:Mchain} Let $X_1,X_2,\ldots,X_n$ be as above. Then,
\begin{itemize}
\item[i)]The vector $(X^{\downarrow}_1,X^{\downarrow}_2,\ldots,X^{\downarrow}_n)$ forms an inhomogeneous (finite) Markov chain with initial density
\be
f_{X^{\downarrow}_1}(x)=nF(x)^{n-1}f(x),\label{eq:initial_dens}
\ee
and transition densities given by
\be
f_{X^{\downarrow}_{k+1}|X^{\downarrow}_{k}}(y|x)=
\begin{cases}
(n-k)\dfrac{F(y)^{n-k-1}}{F(x)^{n-k}}f(y) &\text{if $y\leq x$}\\
0&\text{otherwise}.
\end{cases}\label{eq:trans_densS}
\ee
\item[ii)] The vector $(X^{\downarrow}_n,X^{\downarrow}_{n-1},\ldots,X^{\downarrow}_1)$ forms an inhomogeneous (finite) Markov chain with initial density
\be
f_{X^{\downarrow}_n}(x)=n(1-F(x))^{n-1}f(x),\label{eq:initial_dens2}
\ee
and transition densities given by
\be
f_{X^{\downarrow}_{n-k}|X^{\downarrow}_{n-k+1}}(y|x)=
\begin{cases}
(n-k)\dfrac{(1-F(y))^{n-k-1}}{(1-F(x))^{n-k}}f(y) &\text{if $y\geq x$}\\
0&\text{otherwise}.
\end{cases}\label{eq:trans_densS2}
\ee
\end{itemize}
\end{prop}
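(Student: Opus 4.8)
The plan is to deduce both parts of the proposition from the closed-form joint densities of the order statistics already established in the preceding Lemma, combined with the elementary characterisation of a discrete-time Markov chain in terms of conditional densities. Recall that a random vector $(Z_1,\dots,Z_n)$ with a joint density is a Markov chain if and only if, for each $k$, the conditional density $f_{Z_{k+1}\mid Z_1,\dots,Z_k}(\,\cdot\mid z_1,\dots,z_k)$ depends on $(z_1,\dots,z_k)$ through $z_k$ alone; equivalently, if the joint density factorises as a product of factors each depending on at most two consecutive coordinates. I will verify the conditional-density formulation for part i) and reduce part ii) to part i).

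For part i), I would start from the joint density of the $k$ largest order statistics, \eqref{eq:density_12k1}:
\[
f_{X^{\downarrow}_1,\dots,X^{\downarrow}_k}(x_1,\dots,x_k)=\frac{n!}{(n-k)!}\,f(x_1)\cdots f(x_k)\,F(x_k)^{n-k},\qquad x_1\geq\cdots\geq x_k .
\]
The case $k=1$ is precisely \eqref{eq:initial_dens}. Dividing the $(k{+}1)$-variable density by the $k$-variable density gives
\[
f_{X^{\downarrow}_{k+1}\mid X^{\downarrow}_1,\dots,X^{\downarrow}_k}(y\mid x_1,\dots,x_k)=(n-k)\,f(y)\,\frac{F(y)^{n-k-1}}{F(x_k)^{n-k}}\,1_{y\leq x_k},
\]
which depends on the past only through $x_k$. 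Hence $(X^{\downarrow}_1,\dots,X^{\downarrow}_n)$ is a Markov chain, manifestly inhomogeneous, with transition density \eqref{eq:trans_densS}. On the relevant event $F(x_k)>0$ almost surely, so the division is legitimate; alternatively one may factor the full joint density $n!\prod_i f(x_i)\,1_{x_1\geq\cdots\geq x_n}=[n!f(x_1)]\prod_{i=1}^{n-1}[f(x_{i+1})1_{x_{i+1}\leq x_i}]$ to read off the Markov property directly, and then compute the transitions as above.

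For part ii), rather than repeating the computation with \eqref{eq:density_12k2}, I would pass to $Y_i:=-X_i$. Since $F$ is continuous with a density, the $Y_i$ are i.i.d.\ with distribution function $G(y)=1-F(-y)$ and density $g(y)=f(-y)$, and one has $Y^{\downarrow}_j=-X^{\downarrow}_{\,n+1-j}$ for all $j$. Therefore $(X^{\downarrow}_n,X^{\downarrow}_{n-1},\dots,X^{\downarrow}_1)=(-Y^{\downarrow}_1,-Y^{\downarrow}_2,\dots,-Y^{\downarrow}_n)$ is an affine image of the Markov chain supplied by part i) applied to the $Y_i$, hence itself Markov; substituting $G(y)=1-F(-y)$, $g(y)=f(-y)$ into \eqref{eq:initial_dens}--\eqref{eq:trans_densS} and undoing the sign change reproduces \eqref{eq:initial_dens2}--\eqref{eq:trans_densS2}. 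One may equally well prove part ii) from scratch by the same ratio-of-marginals argument applied to \eqref{eq:density_12k2}.

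The argument is short and essentially computational, so there is no serious obstacle; the only points that need a little care are invoking the correct characterisation of the Markov property (so that ``the conditional law of the next variable depends only on the current one'' genuinely upgrades to the full family of conditional-independence statements), checking that the marginal one divides by is positive off a null set, and, for part ii), using the continuity of $F$ so that the order statistics are almost surely distinct and the relation $G(y)=1-F(-y)$ holds without boundary corrections.
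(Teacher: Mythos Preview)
Your proof is correct and follows essentially the same route as the paper: for part~i) you compute the conditional density $f_{X^{\downarrow}_{k+1}\mid X^{\downarrow}_1,\dots,X^{\downarrow}_k}$ as the ratio of the joint densities~\eqref{eq:density_12k1} for $k{+}1$ and $k$ variables and observe that it depends only on $x_k$, exactly as the paper does. For part~ii) the paper simply says ``mutatis mutandi'' (i.e.\ repeat the ratio argument with~\eqref{eq:density_12k2}), whereas you offer the reflection $Y_i=-X_i$ to reduce to part~i); this is a neat shortcut but not a genuinely different strategy, and you yourself note the direct computation as an alternative.
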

\begin{proof} We  prove Part i). The density~\eqref{eq:initial_dens} is a specialisation of~\eqref{eq:density_1}. 
For $k\leq l$, the conditional density of $X^{\downarrow}_l$ given $X^{\downarrow}_k$ is
\begin{align}
f_{X^{\downarrow}_l|X^{\downarrow}_k}(x_l|x_k)&=\frac{f_{X^{\downarrow}_k,X^{\downarrow}_l}(x_k,x_l)}{f_{X^{\downarrow}_k}(x_k)}=\frac{(n-k)!}{(n-l)!(l-k-1)!}\frac{F(x_l)^{n-l}}{F(x_k)^{n-k}}(F(x_k)-F(x_l))^{l-k-1}f(x_l),
\end{align}
for $x_k\geq x_l$, and zero otherwise. In particular, for $l=k+1$, we get~\eqref{eq:trans_densS}.
Similarly, from~\eqref{eq:density_12k1}, we have, for all $x_1\geq x_2\geq\cdots\geq x_k$,
 \begin{align}
f_{X^{\downarrow}_{k+1}|X^{\downarrow}_1,X^{\downarrow}_2\ldots,X^{\downarrow}_k}(x_{k+1}|x_1,x_2,\ldots,x_k)&=
\frac{f_{X^{\downarrow}_1,X^{\downarrow}_2,\dots X^{\downarrow}_{k+1}}(x_1,x_2,\dots,x_{k+1})}{f_{X^{\downarrow}_1,X^{\downarrow}_2,\dots X^{\downarrow}_{k}}(x_1,x_2,\dots,x_{k})}=(n-k)\dfrac{F(x_{k+1})^{n-k-1}}{F(x_k)^{n-k}}f(x_{k+1}),
\end{align}
for $x_{k+1}\leq x_k$, and zero otherwise. Hence, we have  proved that
\be
f_{X^{\downarrow}_{k+1}|X^{\downarrow}_1,X^{\downarrow}_2\ldots,X^{\downarrow}_k}(x_{k+1}|x_1,x_2,\ldots,x_k)=f_{X^{\downarrow}_{k+1}|X^{\downarrow}_k}(x_{k+1}|x_k).
\ee
\emph{Mutatis mutandi} we can prove Part ii).
\end{proof}
\begin{rmk}
\label{rmk:distr}
The previous formulae for the densities can be rephrased in terms of the distribution functions:
\begin{align}
F_{X^{\downarrow}_1}(x)&=\int_{-\infty}^xf_{X^{\downarrow}_1}(z)dz=F(x)^n,\\
F_{X^{\downarrow}_{k+1}|X^{\downarrow}_{k}}(y|x)&=\int_{-\infty}^yf_{X^{\downarrow}_{k+1}|X^{\downarrow}_{k}}(z|x)dz=\left(\frac{F(\min(y,x))}{F(x)}\right)^{n-k}\\
F_{X^{\downarrow}_n}(x)&=1-\int_{x}^{+\infty}f_{X^{\downarrow}_n}(z)dz=1-(1-F(x))^n\\
F_{X^{\downarrow}_{n-k}|X^{\downarrow}_{n-k+1}}(y|x)&=1-\int_{y}^{\infty}f_{X^{\downarrow}_{n-k}|X^{\downarrow}_{n-k+1}}(z|x)dz=1-\left(\frac{1-F(\max(y,x))}{1-F(x)}\right)^{n-k}.
\end{align}
\end{rmk}
\section{Proof of Proposition~\ref{prop:Markov}}
\label{app:proof_propMarkov}
  \begin{proof}
We first recall a standard representation for the uniform distribution in $\Delta_{n-1}$ in terms of i.i.d. exponential random variables, and the classical asymptotic distributions of the extreme values for exponential random variables.
 \begin{lem} Let $X_1,X_2,\ldots$ be independent exponential random variables with rate $1$, i.e. $P(X\leq x)=1-\e^{-x}$. Then, the vector
 \be
  \left(\mu_1,\mu_2,\ldots,\mu_n\right):=\left(\frac{X_1}{\sum_{i=1}^nX_i},\frac{X_2}{\sum_{i=1}^nX_i},\ldots,\frac{X_n}{\sum_{i=1}^nX_i}\right)
  \label{eq:YZ}
\ee
 is uniformly distributed in $\Delta_{n-1}$.
 \end{lem}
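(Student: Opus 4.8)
\medskip\noindent\emph{Proof sketch (plan).}
The plan is to reduce the statement to a single change of variables. Write $S:=\sum_{i=1}^n X_i$ and $\mu_i:=X_i/S$; since $\mu_1+\dots+\mu_n=1$, only $\mu_1,\dots,\mu_{n-1}$ are free, so I would use $(S,\mu_1,\dots,\mu_{n-1})$ as new coordinates in place of $(X_1,\dots,X_n)$. The inverse map is $x_i=s\,\mu_i$ for $i=1,\dots,n-1$ and $x_n=s\bigl(1-\mu_1-\dots-\mu_{n-1}\bigr)$, which is a smooth bijection from $(0,\infty)\times\{\mu\colon \mu_i>0,\ \sum_{i<n}\mu_i<1\}$ onto $(0,\infty)^n$.

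First I would record the joint law of $(X_1,\dots,X_n)$: by independence its density is $\prod_{i=1}^n\e^{-x_i}=\e^{-\sum_i x_i}$ on the positive orthant. Then I would compute the Jacobian determinant of the inverse map; the matrix of partial derivatives $\partial x_i/\partial(s,\mu_1,\dots,\mu_{n-1})$ has its $\mu_j$-columns equal to $s$ times standard basis vectors and its $s$-column equal to $(\mu_1,\dots,\mu_{n-1},1-\sum_{i<n}\mu_i)^{\mathsf T}$, and expanding along that last column gives determinant $s^{n-1}$. Substituting $\sum_i x_i=s$, the joint density of $(S,\mu_1,\dots,\mu_{n-1})$ is therefore $\e^{-s}s^{n-1}$ on $(0,\infty)\times\Delta_{n-1}$ (understanding $\mu_n=1-\sum_{i<n}\mu_i$).

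To conclude, I would note that this density factorizes as a function of $s$ alone times the (constant) indicator of the simplex; integrating out $s$ via $\int_0^\infty\e^{-s}s^{n-1}\,\de s=(n-1)!$ shows that the marginal density of $(\mu_1,\dots,\mu_{n-1})$ equals the constant $(n-1)!$ on $\Delta_{n-1}$, which is precisely~\eqref{eq:pdelta}; this is the claimed uniformity. As byproducts one also reads off that $S$ is $\mathrm{Gamma}(n,1)$-distributed and independent of $\mu$. There is no real obstacle here --- this is a textbook change of variables; the only point that warrants a little care is the bookkeeping of the Jacobian, since eliminating the redundant coordinate $\mu_n$ is what produces the factor $s^{n-1}$ (and not $s^n$), hence the normalization $(n-1)!$ rather than $n!$.
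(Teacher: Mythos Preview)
Your argument is correct and is the standard textbook proof via change of variables. The paper itself does not supply a proof of this lemma: it is stated in Appendix~\ref{app:proof_propMarkov} as a ``recall'' of a well-known representation and then used without further justification, so there is nothing to compare against beyond noting that your derivation is exactly the one usually given.

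One small point of presentation: your description of the Jacobian matrix is slightly imprecise --- the $\mu_j$-columns are not literally $s$ times standard basis vectors, since the last row contributes $\partial x_n/\partial\mu_j=-s$. The cleanest way to reach $|\det|=s^{n-1}$ is to add rows $1,\dots,n-1$ to row $n$ (using $\sum_i\mu_i=1$) before expanding; your stated value of the determinant is of course correct. This is a cosmetic issue only and does not affect the validity of the sketch.
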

 \begin{lem}
 \label{lem:asymptotics} If $F(x)=1-\e^{-x}$, then
\begin{align}
  \lim_{n\to\infty} 1-(1- F(u/n))^n&=1-\exp(-{u})\quad \text{(exponential distribution)}.\label{eq:asymp_exp}\\
  \lim_{n\to\infty}F(\log n+u)^n&=\exp(-\e^{-u})\quad \text{(Gumbel distribution)\label{eq:asymp_Gumbel}}
\end{align}
\end{lem}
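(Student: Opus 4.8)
The plan is to derive both limits by direct computation from the explicit form of $F$; because $F$ has a pure exponential tail, essentially no asymptotic analysis is needed, and the first identity is in fact exact for every finite $n$.

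First I would record that $F(x)=1-\e^{-x}$ for $x\ge 0$ and $F(x)=0$ for $x<0$, so that all manipulations below are understood for $u$ in the relevant range: $u\ge 0$ in~\eqref{eq:asymp_exp} (the support of $V_1$), and arbitrary fixed $u\in\R$ in~\eqref{eq:asymp_Gumbel} once $n$ is large enough. For~\eqref{eq:asymp_exp} I would simply note that $1-F(u/n)=\e^{-u/n}$ for $u\ge 0$, whence $\bigl(1-F(u/n)\bigr)^{n}=\e^{-u}$ identically in $n$; there is nothing to pass to the limit, and $1-\bigl(1-F(u/n)\bigr)^{n}=1-\e^{-u}$.

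For~\eqref{eq:asymp_Gumbel} I would write, for $n$ large enough that $\log n+u\ge 0$, $F(\log n+u)=1-\e^{-(\log n+u)}=1-\e^{-u}/n$, so that $F(\log n+u)^{n}=(1-\e^{-u}/n)^{n}$. Taking logarithms and expanding, $n\log\bigl(1-\e^{-u}/n\bigr)=-\e^{-u}+\Ord(1/n)\to-\e^{-u}$, and exponentiating yields the Gumbel limit $\exp(-\e^{-u})$. Equivalently, one can invoke the classical limit $(1-a/n)^{n}\to\e^{-a}$ with $a=\e^{-u}$.

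The only point requiring any care — and the ``main obstacle'', such as it is — is the bookkeeping: checking that the argument of $F$ lands in $[0,\infty)$ so that the closed form applies (trivial for~\eqref{eq:asymp_exp}, valid for all large $n$ in~\eqref{eq:asymp_Gumbel}), and restricting $u$ to the range in which the right-hand sides are genuine distribution functions. Beyond that the proof is a two-line calculation whose only input is the elementary limit $(1-a/n)^{n}\to\e^{-a}$.
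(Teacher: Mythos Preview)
Your proof is correct; the paper itself states this lemma as a classical fact and does not supply a proof, so there is nothing to compare against. Your observation that~\eqref{eq:asymp_exp} is an exact identity (not merely an asymptotic one) is a nice touch.
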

Let $\mu=(\mu_1,\mu_2,\ldots,\mu_n)$ defined as in~\eqref{eq:YZ} be a uniform point on $\Delta_{n-1}$. 
Combining Proposition~\ref{prop:Mchain}, Remark~\ref{rmk:distr}, and the formula~\eqref{eq:asymp_exp}, we see that  for any fixed $k\geq1$, 
 $$
  \left(nX^{\downarrow}_{n},nX^{\downarrow}_{n-1},\ldots,nX^{\downarrow}_{n-k+1}\right)
 $$
  converges in distribution to the first $k$ components $(V_1,V_2,\ldots,V_k)$ of the  time-homogeneous Markov chain $(V_j)_{j\geq1}$ with density of $V_1$ and transition density 
$$
f_{V_1}(v)=\exp(-v)1_{v\geq0},\qquad
f_{V_{j+1}|V_j}(u|v)=\exp(v-u)1_{u\geq v},
$$
respectively.  To show the convergence for the order statistics of $\mu$, we simply observe that the vector
   $$
    \left(n^2\mu^{\downarrow}_{n-j+1}\right)_{1\leq j\leq k}=    \left(\frac{n}{\sum_{i=1}^nX_i}nX^{\downarrow}_{n-j+1}\right)_{1\leq j\leq k}
    $$
    has the same limit distribution of $\left(nX^{\downarrow}_{n-j+1}\right)_{1\leq j\leq k}
$.  (Recall that $\mathbb{E} [\sum_{i=1}^nX_i ]=n$; hence the factor  
 $n^{-1}\sum_{i=1}^nX_i$ converges to $1$ by the  law of large numbers.) 

\par
Similarly, from Proposition~\ref{prop:Mchain}, Remark~\ref{rmk:distr}, and the asymptotic formula~\eqref{eq:asymp_Gumbel}, we deduce that
 $$
  \left(X^{\downarrow}_1-\log n,X^{\downarrow}_2-\log n,\ldots,X^{\downarrow}_k-\log n\right)
  $$
  converges in distribution to the first $k$ components $(W_1,W_2,\ldots,W_k)$ of the  time-homogeneous Markov Chain $(W_j)_{j\geq1}$ with density of $W_1$ and transition density 
$$
f_{W_1}(w)=\exp(-\e^{-w}-w),\qquad
f_{W_{j+1}|W_j}(u|w)=\exp(\e^{-w}-\e^{-u}-u)1_{u\leq w},
$$
respectively. 
Denote by $\mu^{\downarrow}$ the decreasing rearrangement of $\mu$. For any $k$, we want to show that
  $$
  \left(n\mu^{\downarrow}_1-\log n,n\mu^{\downarrow}_2-\log n,\ldots,n\mu^{\downarrow}_k-\log n\right)
  $$
  converges in distribution to
  $ (W_1,W_2,\dots,W_k)$. We write
   $$
    \left(n\mu^{\downarrow}_j-\log n\right)_{1\leq j\leq k}=    \left(\frac{n}{\sum_{i=1}^nX_i}\left(X^{\downarrow}_j-\log n\right)+\left(\frac{n}{\sum_{i=1}^nX_i}-1\right)\log n\right)_{1\leq j\leq k},
    $$
and we want to show that this vector has  the same limit distribution of  $\left(X^{\downarrow}_j-\log n\right)_{1\leq j\leq k}$,  as $n\to\infty$. The factor  
 $n^{-1}\sum_{i=1}^nX_i$ converges to $1$ by the  law of large numbers.
 For all $\epsilon>0$,
\begin{align*}
 P\left(\left|\frac{n}{\sum_{i=1}^nX_i}-1\right|>\frac{\epsilon}{\log n}\right)
 &= 
 P\left(\sum_{i=1}^nX_i<\frac{n\log n}{\log n+\epsilon}\quad\text{or}\quad \sum_{i=1}^nX_i>\frac{n\log n}{\log n-\epsilon} \right)\\
& =
 P\left(\sum_{i=1}^nX_i<n-\frac{n\epsilon}{\log n+\epsilon}\quad\text{or}\quad \sum_{i=1}^nX_i>n+\frac{n\epsilon}{\log n-\epsilon} \right)\\
&\leq
 P\left(\left|\sum_{i=1}^nX_i-n\right|>\min\left\{\frac{n\epsilon}{\log n+\epsilon},\frac{n\epsilon}{\log n-\epsilon}\right\} \right).
 \end{align*}
Recall that $\operatorname{Var} [\sum_{i=1}^nX_i] =n$. Assuming $n>\exp(\epsilon)$, and using Chebyshev's inequality we can estimate
\begin{align*}
 P\left(\left|\frac{n}{\sum_{i=1}^nX_i}-1\right|>\frac{\epsilon}{\log n}\right)
  &\leq \frac{\operatorname{Var}\left[\sum_{i=1}^nX_i\right]}{\left(\frac{n\epsilon}{\log n+\epsilon}\right)^2}
= \frac{1}{n\epsilon^2}\left(\log n+\epsilon\right)^2.
\end{align*}
Hence,  $\left(\frac{n}{\sum_{i=1}^nX_i}-1\right)\log n$ converges to $0$ in probability as $n\to\infty$. 
 \end{proof}

\section{Vanishing of the persistence probability above the origin of the IRW. Proof of  Claim~\eqref{eq:ppvanish} in Theorem~\ref{thm:Pvanishing}}
\label{app:proof_claim}

We want to prove that the persistence probability asymptotically vanishes,
$$ 
\lim_{k\to\infty} P\left(\min_{1\leq j\leq k}I_j\geq0\right) = 0.
$$
Notice that 
$$
\lim_{k\to\infty} P\left(\min_{1\leq j\leq k}I_j\geq0\right)  = P\left(\inf_{k \geq1} I_k \geq 0 \right) \leq 
P\left(\liminf_{k \to\infty} I_k \geq 0 \right) = P\left(\liminf_{k\to\infty}\frac{{I}_k}{k\log k}\geq0\right).
$$
Therefore, it is sufficient to show that 
\be
P\left(\liminf_{k\to\infty}\frac{{I}_k}{k\log k}\geq0\right)=0,
\label{eq:prop_thesis_re}
\ee
and this follows from the Lindeberg-Feller central limit theorem as we outline now.

Denote by $A$ the event in~\eqref{eq:prop_thesis_re}. 
\begin{claim}
\label{claim:0-1}
$P(A)\in\{0,1\}$.
\end{claim}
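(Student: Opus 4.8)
The plan is to prove this zero-one law by exhibiting the event $A$ as a tail event with respect to the independent increments $\widetilde{X}_j$ and then invoking Kolmogorov's 0-1 law. The key observation is that $A=\{\liminf_{k\to\infty} I_k/(k\log k)\geq 0\}$ is insensitive to any finite number of the increments. Concretely, write $I_k=\sum_{j=1}^k\sum_{i=1}^j\widetilde{X}_i=\sum_{i=1}^k(k-i+1)\widetilde{X}_i$. If we modify the first $m$ increments $\widetilde{X}_1,\dots,\widetilde{X}_m$, then for $k\geq m$ the value of $I_k$ changes by $\sum_{i=1}^m(k-i+1)\Delta_i$, which is $O(k)$ as $k\to\infty$ (the coefficients grow linearly in $k$ but there are only finitely many of them, and $\Delta_i$ are fixed). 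Dividing by $k\log k$, this perturbation tends to $0$. Hence $\liminf_k I_k/(k\log k)$ is unchanged by altering finitely many increments, so $A$ lies in the tail $\sigma$-algebra $\bigcap_{m\geq 1}\sigma(\widetilde{X}_m,\widetilde{X}_{m+1},\dots)$.

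First I would record that $(\widetilde{X}_j)_{j\geq 1}$ is an i.i.d.\ sequence (differences of independent rate-$1$ exponentials, with the two-sided exponential density $(1/2)\e^{-|x|}$), so Kolmogorov's 0-1 law applies to its tail $\sigma$-algebra. Second, I would make precise the "finite-modification invariance" sketched above: fix $m$, let $I_k'$ be the integrated walk built from a sequence agreeing with $(\widetilde{X}_j)$ for $j>m$ and arbitrary for $j\leq m$; then $|I_k-I_k'|\leq C_m\, k$ for all $k\geq m$, where $C_m$ depends only on the (finitely many) altered coordinates, whence $|I_k-I_k'|/(k\log k)\to 0$ and therefore $\liminf_k I_k/(k\log k)=\liminf_k I_k'/(k\log k)$. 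This shows the indicator of $A$ is a function (up to a null set) of $(\widetilde{X}_j)_{j>m}$ alone, for every $m$; thus $\mathbf{1}_A$ is measurable with respect to the tail $\sigma$-algebra. Third, I would conclude $P(A)\in\{0,1\}$ by the 0-1 law.

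The step I expect to require the most care is the quantitative "$O(k)$ perturbation" bound and the argument that $A$ is genuinely a tail event rather than merely "almost" one — one has to be slightly careful that $\liminf$ of a ratio, not $\liminf$ of $I_k$ itself, is what is invariant, and that the normalization $k\log k$ dominates the linear-in-$k$ error term. This is elementary but is the crux of identifying $A$ with a tail event; everything else is a direct citation of Kolmogorov's 0-1 law. (The actual evaluation that $P(A)=0$ rather than $1$ is the content of the remainder of the appendix and uses the Lindeberg–Feller CLT to show $I_k/(k\log k)$ is frequently negative; it is not needed for this claim.)
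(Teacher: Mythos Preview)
Your proposal is correct and takes essentially the same approach as the paper: both observe that modifying the first $m$ increments perturbs $I_k$ by $O(k)=o(k\log k)$, so $\liminf_k I_k/(k\log k)$ is unaffected by finitely many coordinates, and then invoke Kolmogorov's 0-1 law. The paper's write-up is slightly more roundabout---it introduces auxiliary events $A_J$ (built from $I_k(J)=\sum_{i>J}(k-i+1)\widetilde X_i$) and a genuine tail event $A_\infty=\liminf_J A_J$, then shows $P(A\triangle A_\infty)=0$---but the underlying computation and idea are identical to yours.
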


The proof of the Claim is almost verbatim the proof given by Pittel~\cite{Pittel99}.
For the event
$$
A=\left\{\liminf_{k\to\infty}\frac{{I}_k}{k\log k}\geq0\right\},
$$
we want to show that $P(A)\in\{0,1\}$. The key observation here is that the probability of the event $A$ does not depend on the variables of $\widetilde{X}_1,\widetilde{X}_2,\ldots,\widetilde{X}_J$, no matter how large, albeit finite, $J$ is. 
Indeed, let $\widetilde{V}_k(J)=\sum_{j={J+1}}^k\widetilde{X}_j$, for $k>J$ and $I_k(J)=\sum_{j=J+1}^k \widetilde{V}_j(J)$ for $k>J$ as well. Then, 
$$
I_k-I_k(J)=\sum_{j=1}^k\widetilde{V}_j-\sum_{j={J+1}}^k \widetilde{V}_j(J)=\sum_{j=1}^k j \widetilde{X}_{k-j+1} - \sum_{j=1}^{k-J}j \widetilde{X}_{k-j+1} =\sum_{j=k-J+1}^kj\widetilde{X}_{k-j+1}.
$$

Therefore, almost surely 
$$
\lim_{k\to\infty}\frac{1}{k\log k}\left|I_k-I_k(J)\right|=0,\quad\text{for all $J$}.
$$

So, denoting
$$
A_J=\left\{\liminf_{k\to\infty}\frac{{I}_k(J)}{k\log k}\geq0\right\},
$$
we can write for the symmetric difference $A\triangle A_J$ of the events $A$ and $A_J$,
$$
P\left(A\triangle A_J\right)=0,\quad\text{for all $J$}.
$$
Now observe that $A_J$ is measurable with respect to $(\widetilde{X}_{j})_{j>J}$. (Informally, the event $A_J$ does not involve the first $J$ variables $\widetilde{X}_{1},\ldots,\widetilde{X}_{J}$). Then, writing $\mathrm{a.a.}$ for ``almost always'' and  $\mathrm{i.o.}$ for ``infinitely often'',
$$
A_{\infty}=\liminf_{J}A_J=\bigcup_{J\geq1}\bigcap_{m\geq J} A_m=\left\{A_J\;  \mathrm{a.a.}\right\}
$$
is a tail-event, and 
\begin{align*}
P\left(A\triangle A_{\infty}\right)&=P\left(A\cap A_{\infty}^c\right)+P\left(A^c\cap A_{\infty}\right) =P\left(A\cap A_{J}^c\;\mathrm{i.o.}\right)+P\left(A^c\cap A_{J}\; \mathrm{a.a.}\right)\\
&\leq\sum_{J\geq1}\left[P\left(A\cap A_{J}^c\right)+P\left(A^c\cap A_{J}\right)\right]=\sum_{J\geq1}P\left(A\triangle A_J\right)
=0.
\end{align*}
By the Kolmogorov 0-1 law, $P(A_{\infty})\in\{0,1\}$, so from the previous calculation we obtain $P(A)\in\{0,1\}$, as well.

Given Claim~\ref{claim:0-1}, we can now complete the proof if we show that $P(A)<1$. By the definition of $A$, to do so it suffices to show that
\be
\lim_{k\to\infty}P\left(\frac{{I}_k}{k\log k}\geq-b\right)<1,
\ee
for a constant $b>0$.
 Writing  ${I}_k=\sum_{j=1}^kj\widetilde{X}_{k-j+1}$ it is a routine matter to compute
$$
\mathbb{E}[{I}_k]=0,\,\, \operatorname{Var}[{I}_k]=2\sum_{j=1}^kj^2=\frac{k(k+1)(2k+1)}{3}=O(k^3).
$$
From this, one can check that the sequence
${I}_k$ satisfies the Lindeberg-Feller conditions, and thus ${I}_k/\sqrt{\operatorname{Var}[{I}_k]}$ converges in distribution to the standard Gaussian variable as $k\to\infty$. Hence,
\begin{align*}
P\left(\frac{{I}_k}{k\log k}\geq-b\right)=P\left(\frac{\sqrt{\operatorname{Var}[{I}_k}]}{k\log k}\frac{{I}_k}{\sqrt{\operatorname{Var}[{I}_k}]}\geq-b\right) \stackrel{k\to\infty}{\longrightarrow}\frac{1}{\sqrt{2\pi}}\int_0^{\infty}\e^{-x^2/2}dx=\frac{1}{2}<1.
\hfill\qed
\end{align*}

\end{widetext}

\end{document}